%
\documentclass{article}%
%
\usepackage{amsmath,amsthm}%
\usepackage{amsfonts, color}%
\usepackage{amssymb}%
\usepackage{graphicx}
\usepackage{verbatim} 
\usepackage[T1]{fontenc} 
\usepackage[utf8]{inputenc} 
\usepackage{authblk} 
\usepackage{booktabs}
\newtheorem{theorem}{Theorem}

\newtheorem{definition}{Definition}

\newtheorem{lemma}{Lemma}

\newtheorem{proposition}{Proposition}

\usepackage[margin=1in]{geometry}

\newcommand{\bdsm}{\boldsymbol}

\begin{document}

\title{On linear regression for interval-valued data in $\mathcal{K}_{\mathcal{C}}\left(\mathbb{R}\right)$}
\author[1]{Yan Sun\thanks{yan.sun@usu.edu}} 
\author[2]{Chunyang Li\thanks{chunyang.li@aggiemail.usu.edu}} 
\affil[1,2]{Department of Mathematics $\&$ Statistics\\  
Utah State University\\
3900 Old Main Hill\\
Logan, Utah 84322-3900}

\date{}
\maketitle

\begin{abstract}
It has been some time since interval-valued linear regression was investigated. In this paper, we focus on linear regression for interval-valued data within the framework of random sets. The model we propose generalizes a series of existing models. We establish important properties of the model in the space of compact convex subsets of $\mathbb{R}$, analogous to those for the classical linear regression. Furthermore, we carry out theoretical investigations into the least squares estimation that is widely used in the literature. A simulation study is presented that supports our theorems. Finally, an application to a climate data set is provided to demonstrate the applicability of our model. 
\end{abstract}


\section{Introduction}\label{intro}
Linear regression for interval-valued data has been attracting increasing interests among researchers. See \cite{Diamond90}, \cite{Korner98}, \cite{Gil02,Gil07}, \cite{Manski02}, \cite{Carvalho04}, \cite{Billard07}, \cite{G-R07}, \cite{Neto08,Neto10}, \cite{Blanco11}, \cite{Cattaneo12}, for a partial list of references. However, issues such as interpretability and computational feasibility still remain. Especially, a commonly accepted mathematical foundation is largely underdeveloped, compared to its demand of applications. By proposing our new model, we continue to build up the theoretical framework that deeply understands the existing models and facilitates future developments.   

In the statistics literature, the interval-valued data analysis is most often studied under the framework of random sets, which includes random intervals as the special (one-dimensional) case. The probability-based theory for random sets has developed since the publication of the seminal book of \cite{Matheron75}. See \cite{Molchanov05} for a relatively complete monograph. To facilitate the presentation of our results, we briefly introduce the basic notations and definitions in the random set theory. Let $(\Omega,\mathcal{L},P)$ be a probability space. Denote by $\mathcal{K}\left(\mathbb{R}^d\right)$ or $\mathcal{K}$ the collection of all non-empty compact subsets of $\mathbb{R}^d$. In the space $\mathcal{K}$, a linear structure is defined by Minkowski addition and scalar multiplication, i.e.,
\begin{equation*}
  A+B=\left\{a+b: a\in A, b\in B\right\}\ \ \ \ \lambda A=\left\{\lambda a: a\in A\right\},
\end{equation*}
$\forall A, B\in\mathcal{K}$ and $\lambda\in\mathbb{R}$. A natural metric for the space $\mathcal{K}$ is the Hausdorff metric $\rho_H$, which is defined as
\begin{equation*}
  \rho_H\left(A,B\right)=\max\left(\sup\limits_{a\in A}\rho\left(a,B\right), \sup\limits_{b\in B}\rho\left(b,A\right)\right),\ \forall A,B\in\mathcal{K},
\end{equation*}
where $\rho$ denotes the Euclidean metric.
A random compact set is a Borel measurable function $A: \Omega\rightarrow\mathcal{K}$, $\mathcal{K}$ being equipped with the Borel $\sigma$-algebra induced by the Hausdorff metric. For each $X\in\mathcal{K}\left(\mathbb{R}^d\right)$, the function defined on the unit sphere $S^{d-1}$:
\begin{equation*}
  s_X\left(u\right)=\sup_{x\in X}\left<u, x\right>,\ \ \forall u\in S^{d-1}
\end{equation*}
is called the support function of X. If $A(\omega)$ is convex almost surely, then $A$ is called a random compact convex set. (See \cite{Molchanov05}, p.21, p.102.) The collection of all compact convex subsets of $\mathbb{R}^d$ is denoted by $\mathcal{K}_{\mathcal{C}}\left(\mathbb{R}^d\right)$ or $\mathcal{K}_{\mathcal{C}}$. When $d=1$, the corresponding $\mathcal{K}_{\mathcal{C}}$ contains all the non-empty bounded closed intervals in $\mathbb{R}$. A measurable function $X: \Omega\rightarrow\mathcal{K}_\mathcal{C}\left(\mathbb{R}\right)$ is called a random interval. Much of the random sets theory has focused on compact convex sets. Let $\mathcal{S}$ be the space of support functions of all non-empty compact convex subsets in $\mathcal{K}_{\mathcal{C}}$. Then, $\mathcal{S}$ is a Banach space equipped with the $L_2$ metric 
\begin{equation*}
  \|s_X(u)\|_2=\left[d\int_{S^{d-1}}|s_X(u)|^2\mu\left(\mathrm{d}u\right)\right]^{\frac{1}{2}},
\end{equation*}
where $\mu$ is the normalized Lebesgue measure on $S^{d-1}$. According to the embedding theorems (see \cite{Radstrom52}, \cite{Hormander54}), $\mathcal{K}_{\mathcal{C}}$ can be embedded isometrically into the Banach space $C(S)$ of continuous functions on $S^{d-1}$, and $\mathcal{S}$ is the image of $\mathcal{K}_\mathcal{C}$ into $C(S)$. Therefore, $\delta\left(X, Y\right):=\|s_X-s_Y\|_2$, $\forall X, Y\in\mathcal{K}_\mathcal{C}$, defines a metric on $\mathcal{K}_\mathcal{C}$. Particularly, let 
$$X=[\underline{X}, \overline{X}]=[X^c-X^r, X^c+X^r]$$
be an bounded closed interval with center $X^c$ and radius $X^r$, or lower bound $\underline{X}$ and upper bound $\overline{X}$, respectively. Then, the $\delta$-metric of $X$ is
\begin{equation*}
  \|X\|_2=\|s_{X}(u)\|_2=\frac{1}{2}\left(\underline{X}^2+\overline{X}^2\right)=\left(X^c\right)^2+\left(X^r\right)^2,
\end{equation*}
and the $\delta$-distance between two intervals $X$ and $Y$ is 
\begin{eqnarray*}
  \delta\left(X, Y\right)&=&\left[\frac{1}{2}\left(\underline{X}-\underline{Y}\right)^2+\frac{1}{2}\left(\overline{X}-\overline{Y}\right)^2\right]^{\frac{1}{2}}\\
  &=&\left[\left(X^c-Y^c\right)^2+\left(X^r-Y^r\right)^2\right]^{\frac{1}{2}}.
\end{eqnarray*}

Existing literature on linear regression for interval-valued data mainly falls into two categories. In the first, separate linear regression models are fitted to the center and range (or the lower and upper bounds), respectively, treating the intervals essentially as bivariate vectors. Examples belonging to this category include the center method by \cite{Billard00}, the MinMax method by \cite{Billard02}, the (constrained) center and range method by \cite{Neto08,Neto10}, and the model M by \cite{Blanco11}. These methods aim at building up model flexibility and predicting capability, but without taking the interval as a whole. Consequently, their geometric interpretations are prone to different degrees of ambiguity. Take the constrained center and range method (CCRM) for example. Adopting the notations in  \cite{Neto10}, it is specified as 
\begin{eqnarray*}
  Y_i^c&=&\beta_0^c+\beta_1^cX_{i}^c+\epsilon_i^c,\\
  Y_i^r&=&\beta_0^r+\beta_1^rX_{i}^r+\epsilon_i^r,
\end{eqnarray*}
where $E(\epsilon_i^c)=E(\epsilon_i^r)=0$ and $\beta_0^r, \beta_1^r\geq 0$. It follows that 
\begin{eqnarray*}
  \left(\hat{Y}_i^c-\hat{Y}_j^c\right)^2+\left(\hat{Y}_i^r-\hat{Y}_j^r\right)^2
  =\left[\beta_1^c\left(X_i^c-X_j^c\right)\right]^2+\left[\beta_1^r\left(X_i^r-X_j^r\right)\right]^2.
\end{eqnarray*}
Because $\beta_0^r\neq\beta_1^r$ in general, a constant change in $\|X\|_2$ does not result in a constant change in $\|Y\|_2$. In fact, a constant change in any metric of $X$ as an interval does not lead to a constant change in the same metric of $Y$. This essentially means that the model is not linear in intervals.  

In the second category, special care is given to the fact that the interval is a non-separable geometric unit, and their linear relationship is studied in the framework of random sets. Investigation in this category began with \cite{Diamond90} developing a least squares fitting of compact set-valued data and considering the interval-valued input and output as a special case. Precisely, he gave analytical solutions to the real-valued numbers $a$ and $b$ under different circumstances such that $\delta\left(Y,aX+b\right)$ is minimized on the data. The pioneer idea of \cite{Diamond90} was further studied in \cite{Gil01,Gil02}, where the $\delta$-metric was extended to a more general metric called $W$-metric originally proposed by \cite{Korner98}. The advantage of the $W$-metric lies in the flexibility to assign weights to the radius and midpoints in calculating the distance between intervals. So far the literature had been focusing on finding the affine transformation $Y=aX+b$ that best fits the data, but the data are not assumed to fulfill such a transformation. A probabilistic model along this direction kept missing until \cite{Gil07}, and simultaneously \cite{G-R07}, proposed the same simple linear regression model for the first time. The model essentially takes on the form of 
\begin{equation}\label{mod-Gil}
  Y_i=aX_i+b+{\epsilon}_i,
\end{equation}
with $a,b\in\mathbb{R}$ and $E({\epsilon}_i)=[-c,c], c\in\mathbb{R}$. This can be written equivalently as
\begin{eqnarray*}
  Y_i^c&=&aX_i^c+b+\epsilon_i^c,\\
	Y_i^r&=&|a|X_i^r+c+\epsilon_i^r.
\end{eqnarray*} 
It leads to the following equation that clearly shows linearity in $\mathcal{K}_{\mathcal{C}}$:
\begin{eqnarray*}
  \delta\left(\hat{Y}_i, \hat{Y}_j\right)=|a|\delta\left(X_i, X_j\right). 
\end{eqnarray*}

Some advances have been made regarding this model and the associated estimators. \cite{Gil07} derived least squares estimators for the model parameters and examined them from a theoretical perspective. \cite{G-R07} established a test of linear independence for interval-valued data. However, many problems still remain open such as biases and asymptotic distributions, as anticipated in \cite{Gil07}. This paper presents a continuous development addressing some issues and open problems in the direction of model (\ref{mod-Gil}). First, we relax the restriction of model (\ref{mod-Gil}) that the Hukuhara difference $Y\ominus \left(aX+b\right)$ must exist (see \cite{Hukuhara67}) and generalize the univariate model to the multiple case. We also give analytical least squares (LS) solutions to the model parameters. Second, we show that our model and LS estimation together accommodate a decomposition of the sums of squares in $\mathcal{K}_{\mathcal{C}}$ analogous to that of the classical linear regression. Third, we derive explicit formulas of the LS estimates for the univariate model, which exist with probability going to one. The LS estimates are further shown to be asymptotically unbiased. A simulation study is carried out to validate our theoretical findings, as well as compare our model to CCRM. Finally, we apply our model to a climate data set to illustrate the applicability of our model. 

The rest of the paper is organized as follows: Section 2 formally introduces our model and the associated LS estimators. Then, the sums of squares and coefficient of determination in $\mathcal{K}_{\mathcal{C}}$ are defined and discussed. Section 3 presents the theoretical properties of the LS estimates for the univariate model. The simulation study is reported in Section 4, and the real data application is presented in Section 5. We give concluding remarks in Section 6. Technical proofs and useful lemmas are deferred to the Appendices. 

\section{The proposed model}
\subsection{Model specification}
We consider an extension of model (\ref{mod-Gil}) to the form
\begin{eqnarray}
  &&\delta\left(Y_i, aX_i+b\right)=\left\|\epsilon_i\right\|_2,\label{mod-1}
\end{eqnarray}
where $E[\epsilon_i]=[-c,c]$, $c>0$. It is equivalently expressed as  
\begin{equation}\label{mod-cases}
  \begin{cases}
   & Y_i=aX_i+b+\epsilon_i,\ \text{if}\ Y_i\ominus \left(aX_i+b\right)\ \text{exists};\\ 
   & Y_i+\epsilon_i=aX_i+b,\ \text{if otherwise}\ \left(aX_i+b\right)\ominus Y_i\ \text{exists}.
	\end{cases}
\end{equation}
This leads to the following center-radius specification
\begin{eqnarray*}
  Y_i^c &=& aX_i^c+b\pm\epsilon_i^c,\\
  Y_i^r &=& \left|a\right|X_i^r\pm\epsilon_i^r,
\end{eqnarray*}
where $E(\epsilon_i^c)=0$, $E(\epsilon_i^r)=c>0$, and the signs ``$\pm$" correspond to the two cases in (\ref{mod-cases}). Define
\begin{equation}
  \begin{cases}
	\lambda_i=\epsilon_i^c,\ \eta_i=\epsilon_i^r,\ &\text{if}\ Y_i\ominus \left(aX_i+b\right)\ \text{exists};\\ 
	\lambda_i=-\epsilon_i^c,\ \eta_i=-\epsilon_i^r,\ &\text{if otherwise}\ \left(aX_i+b\right)\ominus Y_i\ \text{exists}.
	\end{cases}
\end{equation}
Our model is specified as
\begin{eqnarray}
  Y_i^c &=& aX_i^c+b+\lambda_i,\label{mod-1**}\\
  Y_i^r &=& \left|a\right|X_i^r+\eta_i,\label{mod-2**}
\end{eqnarray}
where $\text{E}(\lambda_i)=0$, $\text{E}(\eta_i)=\mu\in[-c,c]$, $\text{Var}(\lambda_i)=\sigma^2_{\lambda}>0$, and $\text{Var}(\eta_i)=\sigma^2_{\eta}>0$. 

To model the outcome intervals $Y_i=\left[\underline{Y_i}, \overline{Y_i}\right]$ by $p$ interval-valued predictors $X_{j,i}=\left[\underline{X_{j,i}}, \overline{X_{j,i}}\right]$, $i=1,\cdots,n$; $j=1,\cdots,p$, we consider the multivariate extension of (\ref{mod-1}):
\begin{eqnarray}
  &&\delta\left(Y_i, b+\sum_{j=1}^{p}a_jX_{j,i}\right)=\left\|\epsilon_i\right\|_2,\label{mmod-1}
\end{eqnarray}
which leads to the following center-radius specification 
\begin{eqnarray}
  Y_i^c &=& b+\sum_{j=1}^{p}a_jX_{j,i}^c+\lambda_i,\label{mmod-1**}\\
  Y_i^r &=& \sum_{j=1}^{p}\left|a_j\right|X_{j,i}^r+\eta_i.\label{mmod-2**}
\end{eqnarray}
where $\text{E}(\lambda_i)=0$, $\text{E}(\eta_i)=\mu\in[-c,c]$, $\text{Var}(\lambda_i)=\sigma^2_{\lambda}$, and $\text{Var}(\eta_i)=\sigma^2_{\eta}$. We have assumed $\lambda_i$ and $\eta_i$ are independent in this paper to simplify the presentation. The model that includes a covariance between $\lambda_i$ and $\eta_i$ can be implemented without much extra difficulty. 

\subsection{Least squares estimate (LSE)}
Least squares method is widely used in the literature to estimate the interval-valued regression coefficients (\cite{Diamond90}, \cite{Korner98}, \cite{Gil02}). It minimizes $\delta\left(Y,\text{E}(Y|X)\right)$ on the data with respect to the parameters. Denote 
\begin{eqnarray}
  \hat{Y}_i^c&=&\text{E}(Y_i^c|X_i)=b+\sum_{j=1}^{p}a_jX_{j,i}^c,\label{exp-c}\\
	\hat{Y}_i^r&=&\text{E}(Y_i^r|X_i)=\mu+\sum_{j=1}^{p}\left|a_j\right|X_{j,i}^r.\label{exp-r}
\end{eqnarray}
Then the sum of squared $\delta$-distance between $Y_i$ and $\hat{Y}_i$ is written as
\begin{eqnarray*}
  L&=&\sum_{i=1}^{n}\delta^2\left[\text{E}\left(Y_i|X_i\right), Y_i\right]\\
	&=&\sum_{i=1}^{n}\left[\left(b+\sum_{j=1}^{p}a_jX_{j,i}^c-Y_i^c\right)^2+\left(\sum_{j=1}^{p}\left|a_j\right|X_{j,i}^r+\mu-Y_i^r\right)^2\right].
\end{eqnarray*}
Therefore, the LSE of $\left\{\mu, b, a_j, j=1,\cdots, p\right\}$ is defined as
\begin{equation}
  \left\{\hat{\mu}, \hat{b}, \hat{a}_j, j=1,\cdots, p\right\}
  =\arg\min \left\{\frac{1}{n}L\left(\mu, b, a_j, j=0,\cdots, p\right)\right\}.\label{def-ls}
\end{equation}
Let 
\begin{eqnarray*}
  S\left(X_j^c, X_k^c\right) &=& \frac{1}{n}\sum_{i=1}^{n}X_{j,i}^cX_{k,i}^c-\left(\frac{1}{n}\sum_{i=1}^{n}X_{j,i}^c\right)\left(\frac{1}{n}\sum_{i=1}^{n}X_{k,i}^c\right),\\
  S\left(X_j^r, X_k^r\right) &=&  \frac{1}{n}\sum_{i=1}^{n}X_{j,i}^rX_{k,i}^r-\left(\frac{1}{n}\sum_{i=1}^{n}X_{j,i}^r\right)\left(\frac{1}{n}\sum_{i=1}^{n}X_{k,i}^r\right),
\end{eqnarray*}
be the sample covariances of the centers and radii of $X_j$ and $X_k$, respectively. Especially, when $k=j$, we denote by $S^2\left(X_j^c\right)$ and $S^2\left(X_j^r\right)$ the corresponding sample variances. In addition, define
\begin{eqnarray*}
  S\left(X_j^c, Y^c\right) &=& \frac{1}{n}\sum_{i=1}^{n}X_{j,i}^cY^c-\left(\frac{1}{n}\sum_{i=1}^{n}X_{j,i}^c\right)\left(\frac{1}{n}\sum_{i=1}^{n}Y^c\right),\\
  S\left(X_j^r, Y^r\right) &=& \frac{1}{n}\sum_{i=1}^{n}X_{j,i}^rY^r-\left(\frac{1}{n}\sum_{i=1}^{n}X_{j,i}^r\right)\left(\frac{1}{n}\sum_{i=1}^{n}Y^r\right),
\end{eqnarray*}
as the sample covariances of the centers and radii of $X_j$ and $Y$, respectively. Then, the minimization problem (\ref{def-ls}) is solved in the following proposition. 

\begin{proposition}\label{prop:ls_solu}
The least squares estimates of the regression coefficients $\left\{\hat{a}_j\right\}_{j=1}^{p}$, if they exist, are solution of the equation system:
\begin{eqnarray}
  &&\sum_{j=1}^{p}a_jS\left(X_j^c, X_k^c\right)+sgn\left(a_k\right)\sum_{j=1}^{p}|a_j|S\left(X_j^r, X_k^r\right)\nonumber\\
  &=&S\left(X_k^c, Y^c\right)+sgn\left(a_k\right)S\left(X_k^r, Y^r\right),\ \ k=1,\cdots, p.\label{eqn:lse-1}
\end{eqnarray}
And then, $\left\{\hat{b}, \hat{\mu}\right\}$ are given by
\begin{eqnarray}
  \hat{b}&=&\overline{Y^c}-\sum_{j=1}^{p}\hat{a}_j\overline{X_j^c},\label{eqn:lse-2}\\
  \hat{\mu}&=&\overline{Y^r}-\sum_{j=1}^{p}|\hat{a}_j|\overline{X_j^r}.\label{eqn:lse-3}
\end{eqnarray}
\end{proposition}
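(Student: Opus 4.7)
The plan is to reduce the minimization to first-order conditions and then substitute to obtain the stated system. Since $L$ is a sum of squares and is differentiable in $b$, $\mu$, and in each $a_k$ at all points where $a_k \neq 0$, the LSE (if it exists in the interior of that region) must satisfy $\partial L/\partial b = 0$, $\partial L/\partial \mu = 0$, and $\partial L/\partial a_k = 0$ for $k=1,\dots,p$.

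First I would differentiate with respect to $b$ and $\mu$. Because these parameters enter only linearly in the center and radius parts respectively, setting $\partial L/\partial b = 0$ immediately gives $\sum_i (b + \sum_j a_j X_{j,i}^c - Y_i^c) = 0$, hence formula \eqref{eqn:lse-2}, and similarly $\partial L/\partial \mu = 0$ gives formula \eqref{eqn:lse-3}. These two equations express $\hat b$ and $\hat\mu$ in terms of the $\hat a_j$.

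Next I would compute $\partial L/\partial a_k$. Using the chain rule with $\partial |a_j|/\partial a_k = \mathrm{sgn}(a_k)\,\mathbf{1}_{\{j=k\}}$, one obtains
\begin{equation*}
\tfrac{1}{2}\partial L/\partial a_k = \sum_{i=1}^{n} X_{k,i}^c\Bigl(b+\sum_j a_j X_{j,i}^c - Y_i^c\Bigr) + \mathrm{sgn}(a_k)\sum_{i=1}^{n} X_{k,i}^r\Bigl(\sum_j |a_j| X_{j,i}^r + \mu - Y_i^r\Bigr).
\end{equation*}
Substituting the expressions for $\hat b$ and $\hat \mu$ from \eqref{eqn:lse-2}--\eqref{eqn:lse-3} centers the sums around the sample means: the bracketed terms become $\sum_j a_j (X_{j,i}^c - \overline{X_j^c}) - (Y_i^c - \overline{Y^c})$ and $\sum_j |a_j|(X_{j,i}^r - \overline{X_j^r}) - (Y_i^r - \overline{Y^r})$ respectively. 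Dividing by $n$ and recognizing the sample (co)variances $S(X_j^c,X_k^c)$, $S(X_k^c,Y^c)$, $S(X_j^r,X_k^r)$, $S(X_k^r,Y^r)$ yields exactly \eqref{eqn:lse-1}.

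The main obstacle is the non-smoothness of $L$ at any $a_k = 0$: the stationarity equation involves $\mathrm{sgn}(a_k)$, which is only defined for $a_k \neq 0$. The proposition sidesteps this by stating the result conditionally on existence, so I would restrict attention to the open set $\{a_k \neq 0 \text{ for all } k\}$ where $L$ is smooth, and verify that any minimizer satisfies the first-order system above; the global treatment of the sign configuration (there are $2^p$ orthants to consider, and the system \eqref{eqn:lse-1} is linear within each fixed sign pattern) is what makes the problem combinatorial rather than a clean linear algebra problem, but that is already built into the statement through the $\mathrm{sgn}(a_k)$ factors.
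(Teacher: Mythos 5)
Your proposal is correct and follows essentially the same route as the paper's own proof: set $\partial L/\partial \mu=\partial L/\partial b=\partial L/\partial a_k=0$, solve the first two for $\hat b$ and $\hat\mu$, and substitute into the third to recover the sample-covariance system \eqref{eqn:lse-1}. Your additional remark about the non-differentiability of $|a_k|$ at $a_k=0$ and the resulting orthant-by-orthant structure is a point the paper leaves implicit, but it does not change the argument.
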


\subsection{Sums of squares and $R^2$}
The variance of a compact convex random set $X$ in $\mathbb{R}^d$ is defined via its support function as 
\begin{equation*}
  \text{Var}(X)=\text{E}\delta^2\left(X, \text{E}X\right),
\end{equation*}
where the expectation is defined by Aumann integral (see \cite{Aumann65}, \cite{Artstein75}) as
\begin{eqnarray*}
  \text{E}X=\left\{\text{E}\xi:\xi\in X \text{ almost surely}\right\}.
\end{eqnarray*}
See \cite{Korner95,Korner97}. For the case $d=1$, it is shown by straightforward calculations that
\begin{eqnarray*}
  &&\text{E}X=[\text{E}\underline{X},\text{E}\overline{X}],\\
  &&\text{Var}(X)=\text{Var}\left(X^c\right)+\text{Var}\left(X^r\right).
\end{eqnarray*}
This leads us to define the sums of squares in $\mathcal{K}_\mathcal{C}\left(\mathbb{R}\right)$ to measure the variability of interval-valued data. A definition of the coefficient of determination $R^2$ in $\mathcal{K}_\mathcal{C}\left(\mathbb{R}\right)$ follows immediately, which produces a measure of goodness-of-fit. 
\begin{definition}
The total sum of squares (SST) in $\mathcal{K}_\mathcal{C}$ is defined as 
\begin{equation}\label{def:sst}
  SST=\sum_{i=1}^{n}\left[\left(Y_i^c-\overline{Y^c}\right)^2+\left(Y_i^r-\overline{Y^r}\right)^2\right]. 
\end{equation}
\end{definition}
\begin{definition}
The explained sum of squares (SSE) in $\mathcal{K}_\mathcal{C}$ is defined as 
\begin{equation}\label{def:sse}
  SSE=\sum_{i=1}^{n}\left[\left(\hat{Y}_i^c-\overline{Y^c}\right)^2+\left(\hat{Y}_i^r-\overline{Y^r}\right)^2\right]. 
\end{equation}
\end{definition}
\begin{definition}
The residual sum of squares (SSR) in $\mathcal{K}_\mathcal{C}$ is defined as 
\begin{equation}\label{def:ssr}
  SSR=\sum_{i=1}^{n}\left[\left(Y_i^c-\hat{Y}_i^c\right)^2+\left(Y_i^r-\hat{Y}_i^c\right)^2\right]. 
\end{equation}
\end{definition}
\begin{definition}
The coefficient of determination ($R^2$) in $\mathcal{K}_\mathcal{C}$ is defined as
\begin{equation}\label{def:r2}
  R^2=1-\frac{SSR}{SST},
\end{equation}
where $SST$ and $SSR$ are defined in (\ref{def:sst}) and (\ref{def:ssr}), respectively.  
\end{definition}
Analogous to the classical theory of linear regression, our model (\ref{mmod-1**})-(\ref{mmod-2**}) together with the LS estimates (\ref{def-ls}) accommodates the partition of $SST$ into $SSE$ and $SSR$. As a result, the coefficient of determination ($R^2$) can also be calculated as the ratio of $SSE$ and $SST$. The partition has a series of important implications of the underlying model, one of which being that the residual $Y\ominus\hat{Y}$/$\hat{Y}\ominus Y$ and the predictor $\hat{Y}$ are empirically uncorrelated in $\left(\mathcal{K}_{\mathcal{C}},\delta\right)$.  
\begin{theorem}\label{thm:ss}
Assume model (\ref{mmod-1**})-(\ref{mmod-2**}). Let $Y_i^c$ and $Y_i^r$ in (\ref{exp-c})-(\ref{exp-r}) be calculated according to the LS estimates $\left\{\hat{\mu}, \hat{b}, \hat{a}_j, j=1,\cdots, p\right\}$ in (\ref{def-ls}). Then, 
\begin{equation*}
  SST=SSE+SSR.
\end{equation*}
It follows that the coefficient of determination in $\mathcal{K}_{\mathcal{C}}$ is equivalent to
\begin{equation*}
  R^2=SSE/SST.
\end{equation*}
\end{theorem}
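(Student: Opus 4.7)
\medskip

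\noindent\textbf{Proof proposal for Theorem \ref{thm:ss}.}

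The plan is the classical one: expand $SST$ by inserting $\pm \hat{Y}_i^c$ and $\pm \hat{Y}_i^r$ inside the squares, so that
\begin{equation*}
  SST = SSE + SSR + 2\,C,
\end{equation*}
where the cross term is
\begin{equation*}
  C = \sum_{i=1}^{n}\bigl(Y_i^c-\hat{Y}_i^c\bigr)\bigl(\hat{Y}_i^c-\overline{Y^c}\bigr) + \sum_{i=1}^{n}\bigl(Y_i^r-\hat{Y}_i^r\bigr)\bigl(\hat{Y}_i^r-\overline{Y^r}\bigr),
\end{equation*}
and the goal becomes showing $C=0$ when the fitted values are built from the LS estimates. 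Everything else in the theorem then follows from the definition of $R^2$.

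First, I would use the intercept formulas (\ref{eqn:lse-2})--(\ref{eqn:lse-3}) to absorb $\hat{b}$ and $\hat{\mu}$ into centered regressors, writing
\begin{equation*}
  \hat{Y}_i^c - \overline{Y^c} = \sum_{j=1}^{p}\hat{a}_j\bigl(X_{j,i}^c-\overline{X_j^c}\bigr),\qquad \hat{Y}_i^r - \overline{Y^r} = \sum_{j=1}^{p}|\hat{a}_j|\bigl(X_{j,i}^r-\overline{X_j^r}\bigr).
\end{equation*}
Substituting into $C$ and using that the centered factors have mean zero (so the constants $\hat{b}$ and $\hat{\mu}$ vanish from the other factor as well), each of the two sums turns into a combination of the sample covariances defined before Proposition \ref{prop:ls_solu}. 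Explicitly,
\begin{equation*}
  \tfrac{1}{n}C = \sum_{k=1}^{p}\hat{a}_k\Bigl[S(X_k^c,Y^c) - \sum_{j=1}^{p}\hat{a}_j S(X_j^c,X_k^c)\Bigr] + \sum_{k=1}^{p}|\hat{a}_k|\Bigl[S(X_k^r,Y^r) - \sum_{j=1}^{p}|\hat{a}_j| S(X_j^r,X_k^r)\Bigr].
\end{equation*}

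The key step is then to invoke the normal equations (\ref{eqn:lse-1}). For each $k$ they give
\begin{equation*}
  S(X_k^c,Y^c) - \sum_{j=1}^{p}\hat{a}_j S(X_j^c,X_k^c) = \mathrm{sgn}(\hat{a}_k)\Bigl[\sum_{j=1}^{p}|\hat{a}_j| S(X_j^r,X_k^r) - S(X_k^r,Y^r)\Bigr].
\end{equation*}
Multiplying by $\hat{a}_k$ (which converts $\hat{a}_k\,\mathrm{sgn}(\hat{a}_k)$ into $|\hat{a}_k|$) and summing on $k$ shows that the first bracketed sum in $\tfrac{1}{n}C$ is precisely the negative of the second, so they cancel and $C=0$. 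The indeterminate case $\hat{a}_k=0$ is harmless under the convention $\mathrm{sgn}(0)=0$ because then both sides of (\ref{eqn:lse-1}) are multiplied by factors that vanish identically.

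The main obstacle I anticipate is purely bookkeeping: the sign function $\mathrm{sgn}(\hat{a}_k)$ breaks the symmetry between the center and radius parts, so one has to be careful that multiplying by $\hat{a}_k$ (not $|\hat{a}_k|$) on the center side is exactly what is needed to match the $|\hat{a}_j|$ that appears on the radius side. Once that matching is verified term by term via the identity $\hat{a}_k\,\mathrm{sgn}(\hat{a}_k)=|\hat{a}_k|$, the cancellation is immediate, and the partition $SST=SSE+SSR$ together with $R^2=SSE/SST$ follows at once.
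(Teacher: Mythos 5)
Your proposal is correct and follows essentially the same route as the paper: expand $SST$ around the fitted values and show the cross term vanishes as a consequence of the least squares first-order conditions. The only cosmetic difference is that you invoke the covariance-form normal equations (\ref{eqn:lse-1}) together with the intercept formulas, whereas the paper works directly with the raw derivative conditions (\ref{lse-1})--(\ref{lse-3}) in terms of residuals and regressors; the two are algebraically equivalent, and your handling of the $\mathrm{sgn}(\hat{a}_k)$ bookkeeping via $\hat{a}_k\,\mathrm{sgn}(\hat{a}_k)=|\hat{a}_k|$ matches the paper's use of $|a_j|=a_j\,\mathrm{sgn}(a_j)$.
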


\subsection{Positivity of $\hat{Y}_i^r$ and goodness-of-fit}
It is possible to get negative values of $\hat{Y}_i^r$ by its definition (\ref{exp-r}). Theorem \ref{thm:pred-adjust} gives an upper bound of the probability of this unfortunate event. If the model largely explains the variability of $Y^r$, $\sigma^2_{\eta}$ should be very small and so is this bound. Then, the rare cases of negative $\hat{Y}_i^r$ can be rounded up to 0 since $Y_i^r$ is nonnegative. Otherwise, if most of the variability of $Y^r$ lies in the random error, the probability of getting negative predicts may not be ignorable, but it is essentially due to the insufficiency of the model and a different model should be pursued anyway. 

\begin{theorem}\label{thm:pred-adjust}
Consider model (\ref{mmod-1**})-(\ref{mmod-2**}). Let $\hat{Y}_i$ be defined in (\ref{exp-c})-(\ref{exp-r}). Then,
\begin{equation*}
  P\left(\hat{Y}_i^r<0\right)
  \leq\frac{E\left(Y_i^r-\hat{Y_i^r}\right)^2}{\left(Y_i^r\right)^2}
  =\frac{\sigma^2_{\eta}}{\left(Y_i^r\right)^2}.
\end{equation*}
\end{theorem}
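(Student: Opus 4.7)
My plan is to prove this via a direct Markov/Chebyshev-type argument based on the key observation that radii are non-negative. The idea is that $\hat{Y}_i^r < 0$ is an ``extreme'' event that forces the error $Y_i^r - \hat{Y}_i^r$ to be large enough to overshoot the true (non-negative) radius $Y_i^r$.

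First, I would identify the error that drives the discrepancy. From (\ref{mmod-2**}) and (\ref{exp-r}),
\begin{equation*}
  Y_i^r - \hat{Y}_i^r = \Bigl(\sum_{j=1}^{p}|a_j|X_{j,i}^r + \eta_i\Bigr) - \Bigl(\mu + \sum_{j=1}^{p}|a_j|X_{j,i}^r\Bigr) = \eta_i - \mu,
\end{equation*}
so $E(Y_i^r - \hat{Y}_i^r)^2 = \mathrm{Var}(\eta_i) = \sigma_\eta^2$, which verifies the equality on the right-hand side of the stated bound.

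Next, the key set-inclusion step: since $Y_i^r$ is a radius, $Y_i^r \ge 0$. Hence on the event $\{\hat{Y}_i^r < 0\}$,
\begin{equation*}
  Y_i^r - \hat{Y}_i^r = Y_i^r + |\hat{Y}_i^r| \;\ge\; Y_i^r,
\end{equation*}
so that $|Y_i^r - \hat{Y}_i^r| \ge Y_i^r$. This gives the containment $\{\hat{Y}_i^r < 0\} \subseteq \{|Y_i^r - \hat{Y}_i^r| \ge Y_i^r\}$.

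Finally, I would apply Markov's inequality to the non-negative random variable $(Y_i^r - \hat{Y}_i^r)^2 = (\eta_i - \mu)^2$ with threshold $(Y_i^r)^2$, reading the bound conditionally on the observed value $Y_i^r$ (which is the natural interpretation, since $Y_i^r$ appears undecorated in the denominator of the stated bound):
\begin{equation*}
  P\bigl(\hat{Y}_i^r<0\bigr) \;\le\; P\bigl(|Y_i^r - \hat{Y}_i^r| \ge Y_i^r\bigr) \;\le\; \frac{E(Y_i^r - \hat{Y}_i^r)^2}{(Y_i^r)^2} \;=\; \frac{\sigma_\eta^2}{(Y_i^r)^2}.
\end{equation*}
The only real subtlety, and the main obstacle, is the interpretation of the denominator: $Y_i^r$ is itself random, so a literal application of Markov needs $Y_i^r$ to play the role of a fixed positive threshold. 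The cleanest way to handle this is to state (and apply) the inequality conditional on $Y_i^r$, after which the bound holds in the sense of conditional probability; the marginal version follows by taking $Y_i^r$ as the observed radius in each replicate. Apart from that bookkeeping, the argument is a one-line Chebyshev/Markov estimate powered by the non-negativity of the true radius.
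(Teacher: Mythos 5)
Your proof is correct and follows essentially the same route as the paper's: the set inclusion $\{\hat{Y}_i^r<0\}\subseteq\{|Y_i^r-\hat{Y}_i^r|\ge Y_i^r\}$ from non-negativity of the radius, followed by Markov's inequality applied to $(Y_i^r-\hat{Y}_i^r)^2=(\eta_i-\mu)^2$. Your explicit verification that $E(Y_i^r-\hat{Y}_i^r)^2=\sigma_\eta^2$ and your remark about reading the bound conditionally on the observed $Y_i^r$ are welcome additions that the paper's two-line proof leaves implicit.
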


\section{Properties of LSE for the univariate model}
In this section, we study the theoretical properties of the LSE for the univariate model (\ref{mod-1**})-(\ref{mod-2**}). Applying Proposition \ref{prop:ls_solu} to the case $p=1$, we obtain the two sets of half-space solutions, corresponding to $a\geq 0$ and $a<0$, respectively, as follows:
\begin{eqnarray}
    a^+ &=& \frac{S(X^c,Y^c)+S(X^r,Y^r)}{S^2(X^c)+S^2(X^r)},\label{a+}\\
    b^+ &=& \overline{Y^c}-a^+\overline{X^c},\label{b+}\\
   \mu^+ &=& \overline{Y^r}-|a^+|\overline{X^r};\label{mu+}
\end{eqnarray}
and
\begin{eqnarray}
   a^- &=& \frac{S(X^c,Y^c)-S(X^r,Y^r)}{S^2(X^c)+S^2(X^r)},\label{a-}\\
   b^- &=& \overline{Y^c}-a^-\overline{X^c},\label{b-}\\
   \mu^- &=& \overline{Y^r}-|a^-|\overline{X^r}.\label{mu-}
\end{eqnarray}
The final formula for the LS estimates falls in three categories. In the first, there is one and only one set of existing solution, which is defined as the LSE. In the second, both sets of solutions exist, and the LSE is the one that minimizes $L$. In the third situation, neither solution exists, but this only happens with probability going to $0$. We conclude these findings in the following Theorem. 

\begin{theorem}\label{thm:ls_solu}
Assume model (\ref{mod-1**})-(\ref{mod-2**}). Let $\left\{\hat{a}, \hat{b}, \hat{\mu}\right\}$ be the least squares solution defined in (\ref{def-ls}).  If $|S(X^c,Y^c)|>|S(X^r,Y^r)|$, then there exists one and only one half-space solution. More specifically,\\

\textbf{i.} if in addition $S\left(X^c, Y^c\right)>0$, then the LS solution is given by
\begin{equation*}
  \left\{\hat{a}, \hat{b}, \hat{\mu}\right\}=\left\{a^+, b^+, \mu^+\right\};
\end{equation*}

\textbf{ii.} if instead $S\left(X^c, Y^c\right)<0$, then the LS solution is given by
\begin{equation*}
  \left\{\hat{a}, \hat{b}, \hat{\mu}\right\}=\left\{a^-, b^-, \mu^-\right\}.
\end{equation*}
Otherwise, $|S(X^c,Y^c)|<|S(X^r,Y^r)|$, and then either both of the half-space solutions exist, or neither one exists. In particular,\\

\textbf{iii.} if in addition $S\left(X^r, Y^r\right)>0$, then both of the half-space solutions exist, and
\begin{equation*}
  \left\{\hat{a}, \hat{b}, \hat{\mu}\right\}={\arg\min}_ {\left\{\left\{a^+,b^+,\mu^+\right\},\left\{a^-,b^-,\mu^-\right\}\right\}}
  \left\{L\left(a,b,\mu\right)\right\};
\end{equation*}

\textbf{iv.} if instead $S\left(X^r, Y^r\right)<0$, then the LS solution does not exist, but this happens with probability converging to 0.  
\end{theorem}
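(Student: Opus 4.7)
The plan is to split the optimization of $L(a,b,\mu)$ into the two half-spaces $a\ge 0$ and $a<0$, on each of which $|a|$ is linear in $a$ and $L$ is a convex quadratic in $(a,b,\mu)$. On each half-space I would set the three partial derivatives to zero: the equations for $b$ and $\mu$ immediately give the centering identities (\ref{b+})--(\ref{mu+}) or (\ref{b-})--(\ref{mu-}), and substituting these back collapses $L$ into a one-dimensional quadratic in $a$ whose unique critical point is exactly the $a^+$ or $a^-$ of (\ref{a+}) and (\ref{a-}). A half-space solution is admissible precisely when its sign matches the half-space, which translates to $a^+\ge 0 \iff S(X^c,Y^c)+S(X^r,Y^r)\ge 0$ and $a^-<0 \iff S(X^c,Y^c)<S(X^r,Y^r)$.

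With these existence criteria in hand, the bulk of the theorem is an elementary case analysis on the signs and magnitudes of $S(X^c,Y^c)$ and $S(X^r,Y^r)$. When $|S(X^c,Y^c)|>|S(X^r,Y^r)|$, the two quantities $S(X^c,Y^c)\pm S(X^r,Y^r)$ both inherit the sign of $S(X^c,Y^c)$; they are both positive (case i, only $a^+$ admissible) or both negative (case ii, only $a^-$ admissible). When $|S(X^c,Y^c)|<|S(X^r,Y^r)|$, those same two quantities have opposite signs iff $S(X^r,Y^r)>0$, so case (iii) produces two admissible candidates while case (iv) produces none. In case (iii) the LSE is whichever of the two interior minimizers gives smaller $L$, so it is the stated $\arg\min$.

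Case (iv) requires an asymptotic argument. Under model (\ref{mod-1**})--(\ref{mod-2**}) with $\eta_i$ independent of $X_i^r$, a direct computation gives $S(X^r,Y^r)=|a|\,S^2(X^r)+S(X^r,\eta)$, which by the law of large numbers converges almost surely to $|a|\,\mathrm{Var}(X^r)$. Assuming $a\ne 0$ and $\mathrm{Var}(X^r)>0$ (without which the regression problem is degenerate), this limit is strictly positive, and therefore $P\bigl(S(X^r,Y^r)<0\bigr)\to 0$, which is the claim.

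The main obstacle is largely bookkeeping: for each of the four scenarios one has to reconcile a strict inequality on magnitudes with a sign condition on sums and check that exactly one sign pairing is consistent. A secondary subtlety in case (iii) is ruling out the boundary $a=0$ as the true minimizer of $L$ over the whole line; this is immediate since each admissible interior critical point is a strict minimum of its own half-space quadratic and therefore strictly beats the shared boundary value $L(0,\overline{Y^c},\overline{Y^r})$, so the unconstrained LSE is indeed one of the two interior candidates.
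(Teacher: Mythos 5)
Your proposal is correct and follows essentially the same route as the paper: the paper's proof of parts \textbf{i}--\textbf{iii} is simply the observation that they are ``obvious from Proposition \ref{prop:ls_solu}'' via exactly the half-space admissibility and sign bookkeeping you spell out, and part \textbf{iv} is the paper's Lemma \ref{lem:cov_r}, whose content ($S(X^r,Y^r)=|a|S^2(X^r)+S(X^r,\eta)\to|a|\mathrm{Var}(X^r)\geq 0$ by the SLLN) you reproduce verbatim. If anything you are slightly more careful than the paper, which does not explicitly flag that strict positivity of the limit (hence $a\neq 0$ and $\mathrm{Var}(X^r)>0$) is needed to conclude $P(S(X^r,Y^r)<0)\to 0$, nor the boundary comparison at $a=0$ in case \textbf{iii}.
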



Unlike the classical linear regression, LS estimates for the model (\ref{mod-1**})-(\ref{mod-2**}) are biased. We calculate the biases explicitly in Proposition \ref{prop:ls_exp}, which are shown to converge to zero as the sample size increases to infinity. Therefore, the LS estimates are asymptotically unbiased.
\begin{proposition}\label{prop:ls_exp}
Let $\left\{\hat{a},\hat{b},\hat{\mu}\right\}$ be the least squares solution in Theorem \ref{thm:ls_solu}. Then,
\begin{eqnarray*}
  E\left(\hat{a}-a\right)=-\frac{2aS^2(X^r)}{S^2(X^c)+S^2(X^r)}
  \left[P(\hat{a}=a^-)I_{\left\{a\geq 0\right\}}+P(\hat{a}=a^+)I_{\left\{a<0\right\}}\right],
\end{eqnarray*}
\begin{eqnarray*}
  E\left(|\hat{a}|-|a|\right)=-\frac{2|a|S^2(X^c)}{S^2(X^c)+S^2(X^r)}
  \left[P(\hat{a}=a^-)I_{\left\{a\geq 0\right\}}+P(\hat{a}=a^+)I_{\left\{a<0\right\}}\right].
\end{eqnarray*}
\end{proposition}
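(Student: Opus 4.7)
The plan is to substitute the model (\ref{mod-1**})-(\ref{mod-2**}) directly into the closed forms (\ref{a+})-(\ref{a-}), decompose $\hat a$ and $|\hat a|$ over the two candidate half-space solutions, and take conditional expectation given the design. Writing $L_n := S(X^c,\lambda)$, $E_n := S(X^r,\eta)$, and $D := S^2(X^c)+S^2(X^r)$, the model gives $S(X^c,Y^c) = aS^2(X^c)+L_n$ and $S(X^r,Y^r) = |a|S^2(X^r)+E_n$, whence
\[
a^\pm = \frac{aS^2(X^c) \pm |a|S^2(X^r) + L_n \pm E_n}{D}.
\]
The derivation behind Theorem \ref{thm:ls_solu} further shows $a^+ \geq 0$ on $\{\hat a = a^+\}$ and $a^- \leq 0$ on $\{\hat a = a^-\}$, so $\hat a = a^+ I_{\{\hat a=a^+\}} + a^- I_{\{\hat a=a^-\}}$ and $|\hat a| = a^+ I_{\{\hat a=a^+\}} - a^- I_{\{\hat a=a^-\}}$, with the two indicators summing to one up to the negligible Case (iv).

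Splitting on the sign of $a$ and letting $J := I_{\{\hat a=a^+\}} - I_{\{\hat a=a^-\}}$, a few lines of algebra for $a \geq 0$ yield
\[
\hat a - a = \frac{L_n + E_n J - 2aS^2(X^r) I_{\{\hat a=a^-\}}}{D},\qquad |\hat a| - |a| = \frac{L_n J + E_n - 2aS^2(X^c) I_{\{\hat a=a^-\}}}{D},
\]
with mirror formulas when $a<0$ (indicator $I_{\{\hat a=a^+\}}$ in place of $I_{\{\hat a=a^-\}}$ and sign flipped). Conditioning on $X$, the denominators and variance factors become nonrandom, and $E[L_n\mid X] = E[E_n\mid X]=0$ follows from $E\lambda_i=0$ and from the cancellation of $\mu = E\eta_i$ inside the centered covariance $\frac{1}{n}\sum X_i^r(\eta_i-\bar\eta)$. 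The assertion then reduces to the coefficient of $I_{\{\hat a=a^-\}}$ in each expression, reproducing the two lines of the proposition.

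The main obstacle is controlling the cross terms $E[E_n J \mid X]$ and $E[L_n J \mid X]$, which are not obviously zero because $J$ depends on both noise vectors through the classification conditions of Theorem \ref{thm:ls_solu}. I would handle this by conditioning on one noise component at a time, exploiting the independence of $(\lambda_i)$ and $(\eta_i)$, and invoking a reflection/symmetry argument on the remaining centered component so that the cross term becomes the integral of an odd function against a symmetric density. Once these vanish, the $a<0$ case is entirely parallel and produces the second term $P(\hat a=a^+)I_{\{a<0\}}$ in the bracketed factor, and the asymptotic unbiasedness highlighted in the surrounding discussion will follow from a consistency argument that $P(\hat a = a^- \mid X) \to 0$ when $a \geq 0$ (and symmetrically for $a<0$) under the model.
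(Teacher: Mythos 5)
Your algebraic skeleton is the same as the paper's. Both arguments decompose $E(\hat a-a)$ over the events $\{\hat a=a^+\}$ and $\{\hat a=a^-\}$, substitute the model so that (for $a\ge 0$) $a^+-a=(L_n+E_n)/D$ and $a^--a=(L_n-E_n-2aS^2(X^r))/D$ with $L_n=S(X^c,\lambda)$, $E_n=S(X^r,\eta)$, $D=S^2(X^c)+S^2(X^r)$ (the paper reaches the identical expressions through the pairwise-difference identity of Lemma \ref{lem:cov_est}, which is just another way of writing $S(X^c,Y^c)=aS^2(X^c)+L_n$ and $S(X^r,Y^r)=|a|S^2(X^r)+E_n$), and then condition on the design. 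Your formulas for $\hat a-a$ and $|\hat a|-|a|$, including the sign conventions $a^+\ge 0$ on $\{\hat a=a^+\}$ and $a^-\le 0$ on $\{\hat a=a^-\}$, check out and reproduce the paper's intermediate displays.

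The genuine gap is exactly the one you flag: the stated formulas are equivalent to $E\bigl[E_n I_{\{\hat a=a^-\}}\mid X\bigr]=0$ and $E\bigl[L_n I_{\{\hat a=a^-\}}\mid X\bigr]=0$ when $a\ge 0$ (mirror statements when $a<0$), and your proposed reflection argument does not deliver this. The model places no symmetry assumption on the laws of $\lambda_i$ or $\eta_i$; and even granting symmetry, the event $\{\hat a=a^-\}$ is determined through Theorem \ref{thm:ls_solu} by the signs and relative magnitudes of $S(X^c,Y^c)=aS^2(X^c)+L_n$ and $S(X^r,Y^r)=|a|S^2(X^r)+E_n$ and by the comparison of $L(a^+,b^+,\mu^+)$ with $L(a^-,b^-,\mu^-)$, all of which change under the reflection $\eta_i\mapsto 2\mu-\eta_i$ (i.e.\ $E_n\mapsto -E_n$). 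You would therefore be integrating an odd function over a set that is not reflection-invariant, and the cross term has no reason to vanish; in fact part \textbf{iii} of Theorem \ref{thm:ls_solu} ties $\{\hat a=a^-\}$ directly to the size of $S(X^r,Y^r)$, hence to $E_n$. For what it is worth, the paper's own proof does not close this gap either: in the displayed chain for $E(\hat a-a)$ it simply drops the term $\sum_{i<j}(X_i^r-X_j^r)E(\eta_i-\eta_j)I_{\{\hat a=a^-\}}$ without justification. So you have correctly isolated the one nontrivial step, but neither your symmetry sketch nor the paper supplies an argument for it; as written, your proof establishes the claimed identities only up to the uncontrolled remainders $-2E\bigl[E_n I_{\{\hat a=a^-\}}\bigr]/D$ and $-2E\bigl[L_n I_{\{\hat a=a^-\}}\bigr]/D$.
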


\begin{theorem}\label{thm:ls_consist}
Consider model (\ref{mod-1**})-(\ref{mod-2**}). Assume $S^2\left(X^c\right)=O(1)$ and $S^2\left(X^r\right)=O(1)$. Then, the least squares solution $\left\{\hat{a},\hat{b},\hat{\mu}\right\}$ in Theorem \ref{thm:ls_solu} is asymptotically unbiased, i.e.
\begin{eqnarray*}
  E\begin{bmatrix} \hat{a}\\ \hat{b}\\ \hat{\mu}\end{bmatrix}\to
  \begin{bmatrix} a\\ b\\ \mu \end{bmatrix},
\end{eqnarray*}
as $n\to\infty$.
\end{theorem}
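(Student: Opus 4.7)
The plan is to deduce asymptotic unbiasedness of $\hat a$ from the bias identities in Proposition \ref{prop:ls_exp}, and then propagate the result to $\hat b$ and $\hat\mu$ through their closed forms \eqref{eqn:lse-2}--\eqref{eqn:lse-3}. In every step the driver is a weak law of large numbers: under the $O(1)$ hypotheses, $S^2(X^c)$ and $S^2(X^r)$ stay bounded (and bounded away from $0$ in any nondegenerate model), while the sample cross-moments $S(X^c,\lambda)$ and $S(X^r,\eta)$ are $O_p(n^{-1/2})$ because $\lambda$ has mean $0$ and $\eta$ has constant mean $\mu$, both independent of $X$.

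Proposition \ref{prop:ls_exp} expresses $E(\hat a-a)$ as the product of a factor bounded by $2|a|$ and the probability
\[
P(\hat a=a^-)\,I_{\{a\ge 0\}}+P(\hat a=a^+)\,I_{\{a<0\}},
\]
so it suffices to prove this probability tends to $0$. When $a=0$ the identity vanishes outright, so I take $a>0$ (the case $a<0$ is symmetric). Writing $S(X^c,Y^c)=aS^2(X^c)+S(X^c,\lambda)$ and $S(X^r,Y^r)=aS^2(X^r)+S(X^r,\eta)$, the LLN makes both quantities positive with probability tending to $1$, ruling out cases ii and iv of Theorem \ref{thm:ls_solu}. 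In case i one has $\hat a=a^+$ by definition. In case iii both half-space solutions exist and the LSE is selected by minimizing $L$; a direct algebraic expansion at the two candidate points yields
\[
\frac{L(a^-)-L(a^+)}{n}=\frac{4\,S(X^c,Y^c)\,S(X^r,Y^r)}{S^2(X^c)+S^2(X^r)},
\]
which is positive (so $\hat a=a^+$ is selected) whenever $S(X^c,Y^c)$ and $S(X^r,Y^r)$ share a sign, an event of probability tending to $1$ when $a>0$. The same argument applied to the second identity in Proposition \ref{prop:ls_exp} delivers $E|\hat a|\to|a|$.

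For $\hat b$, substituting $\overline{Y^c}=a\overline{X^c}+b+\overline{\lambda}$ into \eqref{eqn:lse-2} yields $\hat b-b=\overline{\lambda}-(\hat a-a)\overline{X^c}$ and hence $E(\hat b)-b=-E[(\hat a-a)\overline{X^c}]$. The key observation is that $E[(a^+-a)\overline{X^c}]=0$ exactly: conditioning on $X$, both $E[S(X^c,\lambda)\mid X]=0$ and $E[S(X^r,\eta)\mid X]=0$, which forces $E[a^+-a\mid X]=0$, and the tower property kills the product with $\overline{X^c}$. Consequently $E[(\hat a-a)\overline{X^c}]$ equals a remainder supported on $\{\hat a\ne a^+\}$, which Cauchy--Schwarz bounds in modulus by a constant multiple of $\sqrt{P(\hat a\ne a^+)}$ under the $O(1)$ assumptions, and this vanishes by the preceding paragraph. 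The parallel argument for $\hat\mu$ uses $\overline{Y^r}=|a|\overline{X^r}+\overline{\eta}$ together with the bias identity for $|\hat a|$.

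The main obstacle is case iii of Theorem \ref{thm:ls_solu}: when both half-space solutions exist, the LSE is chosen by a head-to-head comparison of $L(a^+)$ and $L(a^-)$, and it is not a priori obvious that this comparison selects the correct sign asymptotically. The explicit identity above reduces the choice to the sign of $S(X^c,Y^c)\,S(X^r,Y^r)$, which the LLN then handles. A secondary technicality is the uniform-integrability-type bound needed to pass from vanishing probability to vanishing expectation in the biases of $\hat b$ and $\hat\mu$; this is precisely the role played by the $O(1)$ hypotheses on $S^2(X^c)$ and $S^2(X^r)$.
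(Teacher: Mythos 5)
Your proof is correct and shares the paper's overall skeleton---reduce everything to the bias identities of Proposition \ref{prop:ls_exp}, show that the ``wrong half-space'' probability $P(\hat a=a^-)I_{\{a\ge0\}}+P(\hat a=a^+)I_{\{a<0\}}$ vanishes, then propagate to $\hat b$ and $\hat\mu$ through (\ref{eqn:lse-2})--(\ref{eqn:lse-3})---but you handle the crux (the tie-break in case \textbf{iii} of Theorem \ref{thm:ls_solu}) by a genuinely different and cleaner route. The paper's Lemma \ref{lem:sign-consist} expands $L(a^+,b^+,\mu^+)-L(a^-,b^-,\mu^-)$ term by term and shows, after a page of algebra involving $S(X^c,\lambda)$ and $S(X^r,\eta)$, that $\tfrac1n(I+II)$ converges almost surely to a negative constant. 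You instead use the exact finite-sample identity
\[
\frac{L(a^-,b^-,\mu^-)-L(a^+,b^+,\mu^+)}{n}=\frac{4\,S(X^c,Y^c)\,S(X^r,Y^r)}{S^2(X^c)+S^2(X^r)},
\]
which follows from profiling out $b$ and $\mu$ and evaluating the two resulting quadratics in $a$ at their vertices; I verified it and it is correct. This is both shorter and stronger: on the event $\{S(X^c,Y^c)>0,\,S(X^r,Y^r)>0\}$, whose probability tends to one when $a>0$ by the law of large numbers (the same fact underlying the paper's Lemma \ref{lem:cov_r}), the comparison selects $a^+$ \emph{deterministically}, so no asymptotic analysis of the tie-break itself is needed. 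For $\hat b$ and $\hat\mu$ the paper conditions on $X$ and cites Proposition \ref{prop:ls_exp} directly, whereas you work unconditionally, exploit $E[(a^+-a)\overline{X^c}]=0$ via the tower property, and control the remainder on $\{\hat a=a^-\}$ by Cauchy--Schwarz; both routes lean on the same unstated regularity (the denominator $S^2(X^c)+S^2(X^r)$ bounded away from zero, $\overline{X^c}$ and $\overline{X^r}$ bounded, and enough integrability to exchange limits and expectations), so you are no worse off than the paper there. Your explicit remark that the $a=0$ case is trivial because the bias identities vanish identically is a small point the paper glosses over.
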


\section{Simulation}\label{sec:simu}
We carry out a systematic simulation study to examine the empirical performance of the least squares method proposed in this paper. First, we consider the following three models:\\

\begin{itemize}
\item Model 1: $a=2$, $b=5$, $\mu=0.5$, $\sigma_{\eta}=0.3$, $\sigma_{\lambda}=2$;
\item Model 2: $a=-2$, $b=5$, $\mu=0.5$, $\sigma_{\eta}=0.3$, $\sigma_{\lambda}=3$;
\item Model 3: $a=2$, $b=5$, $\mu=-0.5$, $\sigma_{\eta}=0.3$, $\sigma_{\lambda}=2$;
\end{itemize}

\noindent where data show a positive correlation, a negative correlation, and a positive correlation with a negative $\mu$, respectively. A simulated dataset from each model is shown in Figure \ref{fig:sim-data}, along with its fitted regression line.

\begin{figure}[ht]
\centering
\includegraphics[ height=1.900in, width=2.300in]{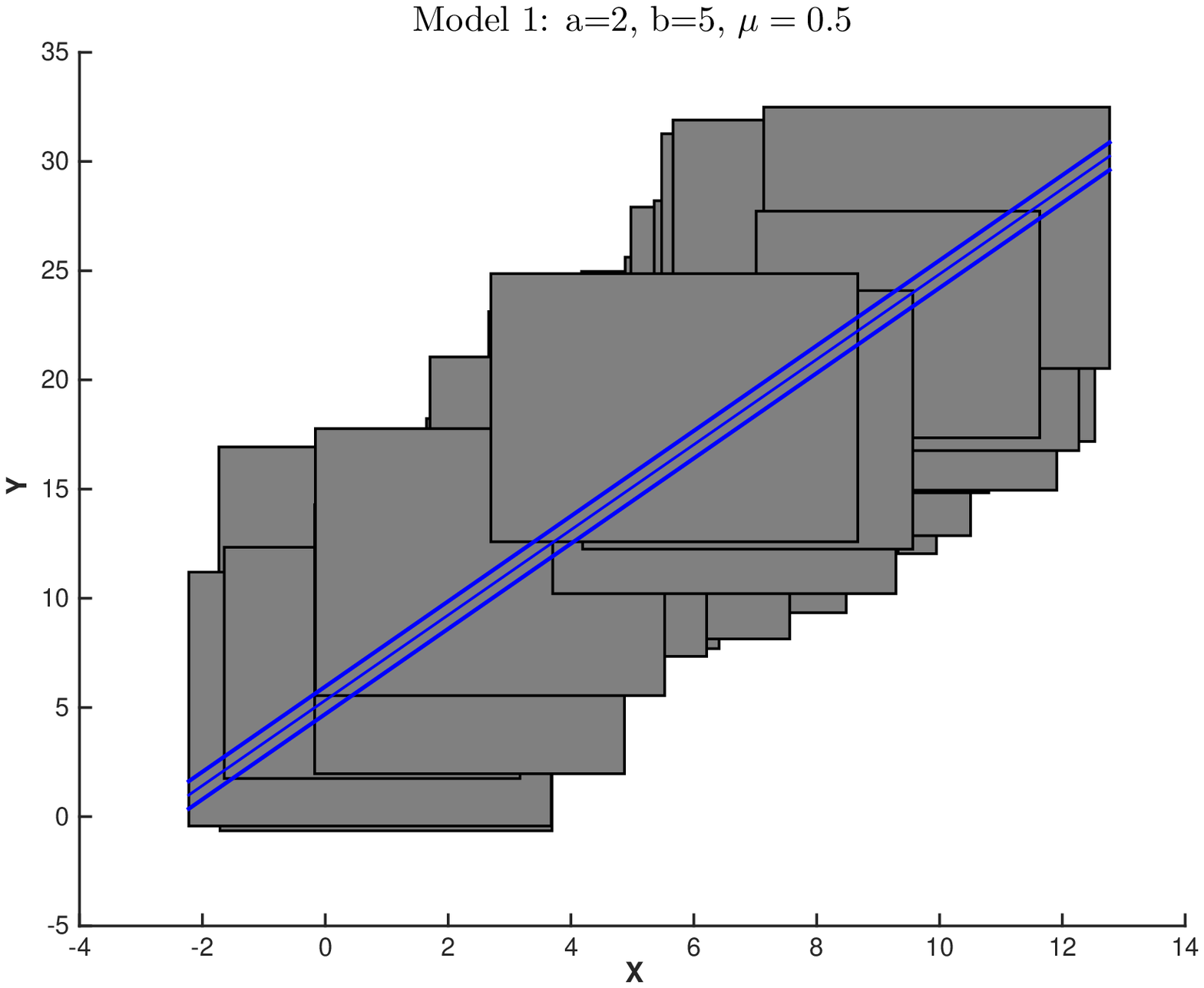}
\includegraphics[ height=1.900in, width=2.300in]{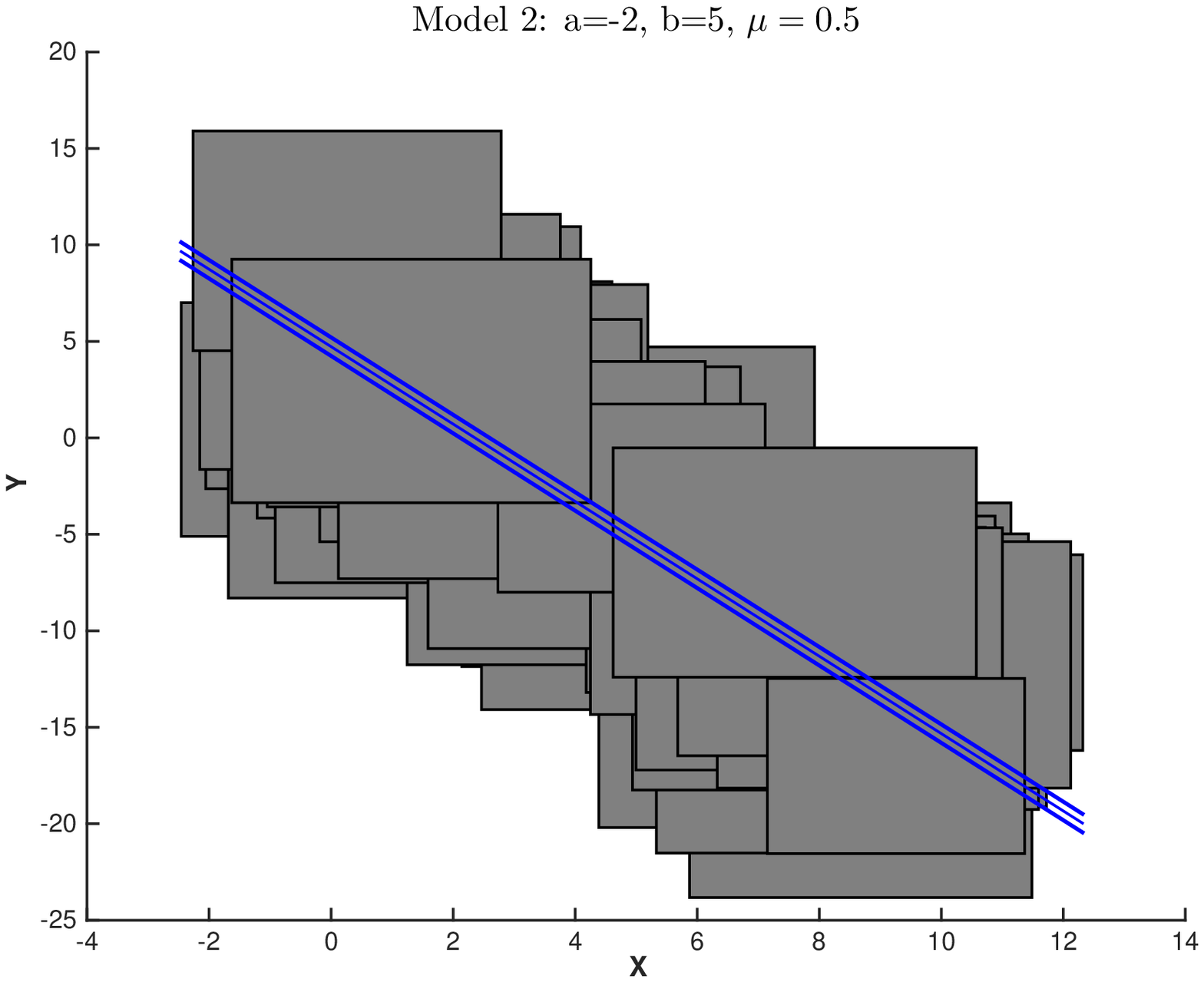}\\
\includegraphics[ height=1.900in, width=2.300in]{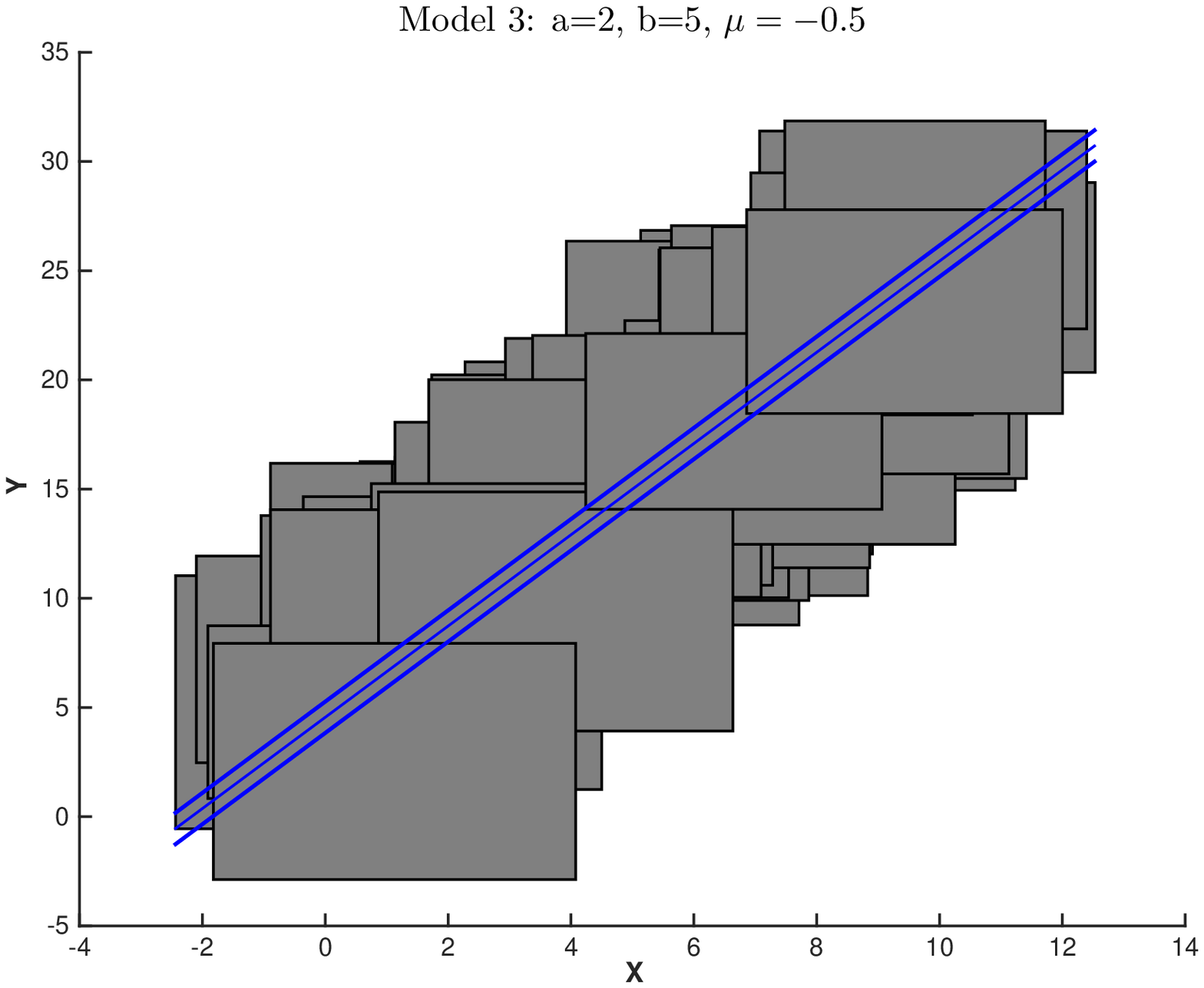}
\caption{Plots of simulated datasets from models 1, 2, and 3, each with sample size $n=50$. The solid line denotes the regression line $y=\hat{a}x+\hat{b}$, and the two dashed lines denote the two accompanying lines $y=\hat{a}x+\hat{b}\pm\hat{\mu}$.}
\label{fig:sim-data}
\end{figure}

To investigate the asymptotic behavior of the LS estimates, we repeat the process of data generation and parameter estimation 1000 times independently using sample size $n=20, 50, 100$ for all the three models. The resulting 1000 independent sets of parameter estimates for each model/sample size are evaluated by their mean absolute error (MAE) and mean error (ME). The numerical results are summarized in Table \ref{tab:sim}. Consistent with Proposition \ref{prop:ls_exp}, $\hat{a}$ tends to underestimate $a$ when $a>0$ and overestimate $a$ when $a<0$. This bias also causes a positive and negative bias in $\hat{b}$, when $a>0$ and $a<0$, respectively. Similarly, a positive bias in $\hat{\mu}$ is induced by the negative bias in $|\hat{a}|$. All the biases diminish to 0 as the sample size increases to infinity, which confirms our finding in Theorems \ref{thm:ls_consist}.

\begin{table}[htbp]
  \centering
  \caption{Evaluation of Parameter Estimation}
  \bigskip
    \begin{tabular}{crrrrrrr}
    \toprule
          & \multicolumn{1}{c}{n} & \multicolumn{1}{c}{MAE} & \multicolumn{1}{c}{ME} & \multicolumn{1}{c}{MAE} & \multicolumn{1}{c}{ME} & \multicolumn{1}{c}{MAE} & \multicolumn{1}{c}{ME} \\
    \midrule
          & \multicolumn{1}{c}{} & \multicolumn{1}{c}{} & \multicolumn{1}{c}{} & \multicolumn{1}{c}{} & \multicolumn{1}{c}{} & \multicolumn{1}{c}{} & \multicolumn{1}{c}{} \\
          & \multicolumn{1}{c}{} & \multicolumn{2}{c}{\textbf{a=2}} & \multicolumn{2}{c}{\textbf{b=5}} & \multicolumn{2}{c}{$\bdsm{\mu=0.5}$} \\
    Model 1 & \multicolumn{1}{l}{20} & \multicolumn{1}{l}{0.1449} & \multicolumn{1}{l}{-0.0921} & \multicolumn{1}{l}{0.8083} & \multicolumn{1}{l}{0.445} & \multicolumn{1}{l}{0.3655} & \multicolumn{1}{l}{0.2304} \\
          & \multicolumn{1}{l}{50} & \multicolumn{1}{l}{0.0848} & \multicolumn{1}{l}{-0.0411} & \multicolumn{1}{l}{0.4899} & \multicolumn{1}{l}{0.2141} & \multicolumn{1}{l}{0.214} & \multicolumn{1}{l}{0.1011} \\
          & \multicolumn{1}{l}{100} & \multicolumn{1}{l}{0.0562} & \multicolumn{1}{l}{-0.0171} & \multicolumn{1}{l}{0.3151} & \multicolumn{1}{l}{0.0872} & \multicolumn{1}{l}{0.142} & \multicolumn{1}{l}{0.041} \\
          & \multicolumn{1}{l}{} &       &       &       & \multicolumn{1}{l}{} & \multicolumn{1}{l}{} & \multicolumn{1}{l}{\textbf{}} \\
          & \multicolumn{1}{l}{} & \multicolumn{2}{c}{\textbf{a=-2}} & \multicolumn{2}{c}{\textbf{b=5}} & \multicolumn{2}{c}{$\bdsm{\mu=0.5}$} \\
    Model 2 & \multicolumn{1}{l}{20} & \multicolumn{1}{l}{0.2011} & \multicolumn{1}{l}{0.103} & \multicolumn{1}{l}{1.1389} & \multicolumn{1}{l}{-0.5071} & \multicolumn{1}{l}{0.5067} & \multicolumn{1}{l}{0.2578} \\
          & \multicolumn{1}{l}{50} & \multicolumn{1}{l}{0.1205} & \multicolumn{1}{l}{0.0336} & \multicolumn{1}{l}{0.6973} & \multicolumn{1}{l}{-0.1774} & \multicolumn{1}{l}{0.3038} & \multicolumn{1}{l}{0.0807} \\
          & \multicolumn{1}{l}{100} & \multicolumn{1}{l}{0.0842} & \multicolumn{1}{l}{0.0185} & \multicolumn{1}{l}{0.4814} & \multicolumn{1}{l}{-0.0865} & \multicolumn{1}{l}{0.2118} & \multicolumn{1}{l}{0.0465} \\
          & \multicolumn{1}{l}{} & \multicolumn{1}{l}{} & \multicolumn{1}{l}{} & \multicolumn{1}{l}{} & \multicolumn{1}{l}{} & \multicolumn{1}{l}{} & \multicolumn{1}{l}{\textbf{}} \\
          & \multicolumn{1}{l}{} & \multicolumn{2}{c}{\textbf{a=2}} & \multicolumn{2}{c}{\textbf{b=5}} & \multicolumn{2}{c}{$\bdsm{\mu=-0.5}$} \\
    Model 3 & \multicolumn{1}{l}{20} & \multicolumn{1}{l}{0.1488} & \multicolumn{1}{l}{-0.1047} & \multicolumn{1}{l}{0.8143} & \multicolumn{1}{l}{0.495} & \multicolumn{1}{l}{0.3785} & \multicolumn{1}{l}{0.262} \\
          & \multicolumn{1}{l}{50} & \multicolumn{1}{l}{0.0836} & \multicolumn{1}{l}{-0.0412} & \multicolumn{1}{l}{0.4703} & \multicolumn{1}{l}{0.2119} & \multicolumn{1}{l}{0.2108} & \multicolumn{1}{l}{0.1015} \\
          & \multicolumn{1}{l}{100} & \multicolumn{1}{l}{0.0579} & \multicolumn{1}{l}{-0.0187} & \multicolumn{1}{l}{0.3321} & \multicolumn{1}{l}{0.098} & \multicolumn{1}{l}{0.1453} & \multicolumn{1}{l}{0.0464} \\
    \bottomrule
    \end{tabular}
  \label{tab:sim}
\end{table}

Next, we compare our model to CCRM, a typical bivariate type of model from the literature. As we discussed in the introduction, these two models are developed for different purposes and are generally not comparable. We include a comparison in the simulation study to better evaluate the performances of our model, with CCRM providing a baseline of converging rate and predicting accuracy. From Model 1, 2, 3, respectively, we simulate 1000 independent samples with size $n=20, 50, 100$. Then, each sample is randomly split into a training set (80\%) and a validation set (20\%). The two models are evaluated by their sample variance adjusted mean squared errors (AMSE's) on the validation set, which are defined as
\begin{eqnarray*}
  &&\text{AMSE(center)}=\frac{\sum_{i=1}^{m}\left(Y^c_i-\hat{Y}^c_i\right)^2}
	{\sum_{i=1}^{m}\left(Y^c_i-\overline{Y}^c_i\right)^2},\\ 
  &&\text{AMSE(radius)}=\frac{\sum_{i=1}^{m}\left(Y^r_i-\hat{Y}^r_i\right)^2}
	{\sum_{i=1}^{m}\left(Y^r_i-\overline{Y}^r_i\right)^2},
\end{eqnarray*}
and 
\begin{eqnarray*}
  \text{AMSE(average)}=\frac{\text{AMSE(center)}+\text{AMSE(radius)}}{2},
\end{eqnarray*}
where $m=n/5$ is the size of validation set. We use the R function $ccrm$ in the iRegression package to implement CCRM. The average result of the 1000 repetitions are summarized in Table \ref{tab:sim-com}. For Model 1 and 2, both models have competitive performances. Model 3 has a negative $\mu$, so CCRM is slightly worse than our model due to its positive restriction on $\mu$. To better show this, we continue to consider the following two univariate models and one multivariate model with a much smaller $\mu$:
\begin{itemize}
\item Model 4: $a=3$, $b=5$, $\mu=-5$, $\sigma_{\eta}=0.5$, $\sigma_{\lambda}=5$;
\item Model 5: $a=-3$, $b=5$, $\mu=-5$, $\sigma_{\eta}=0.5$, $\sigma_{\lambda}=5$;
\item Model 6: $a_1=-3$, $a_2=2$ $b=5$, $\mu=-5$, $\sigma_{\eta}=0.5$, $\sigma_{\lambda}=5$.
\end{itemize}
A sample of $n=50$ from each of Model 4 and 5 are plotted in Figure \ref{fig:sim-data-45}. For all of the three models, our model performs significantly better than CCRM. 

\begin{figure}[ht]
\centering
\includegraphics[ height=1.900in, width=2.300in]{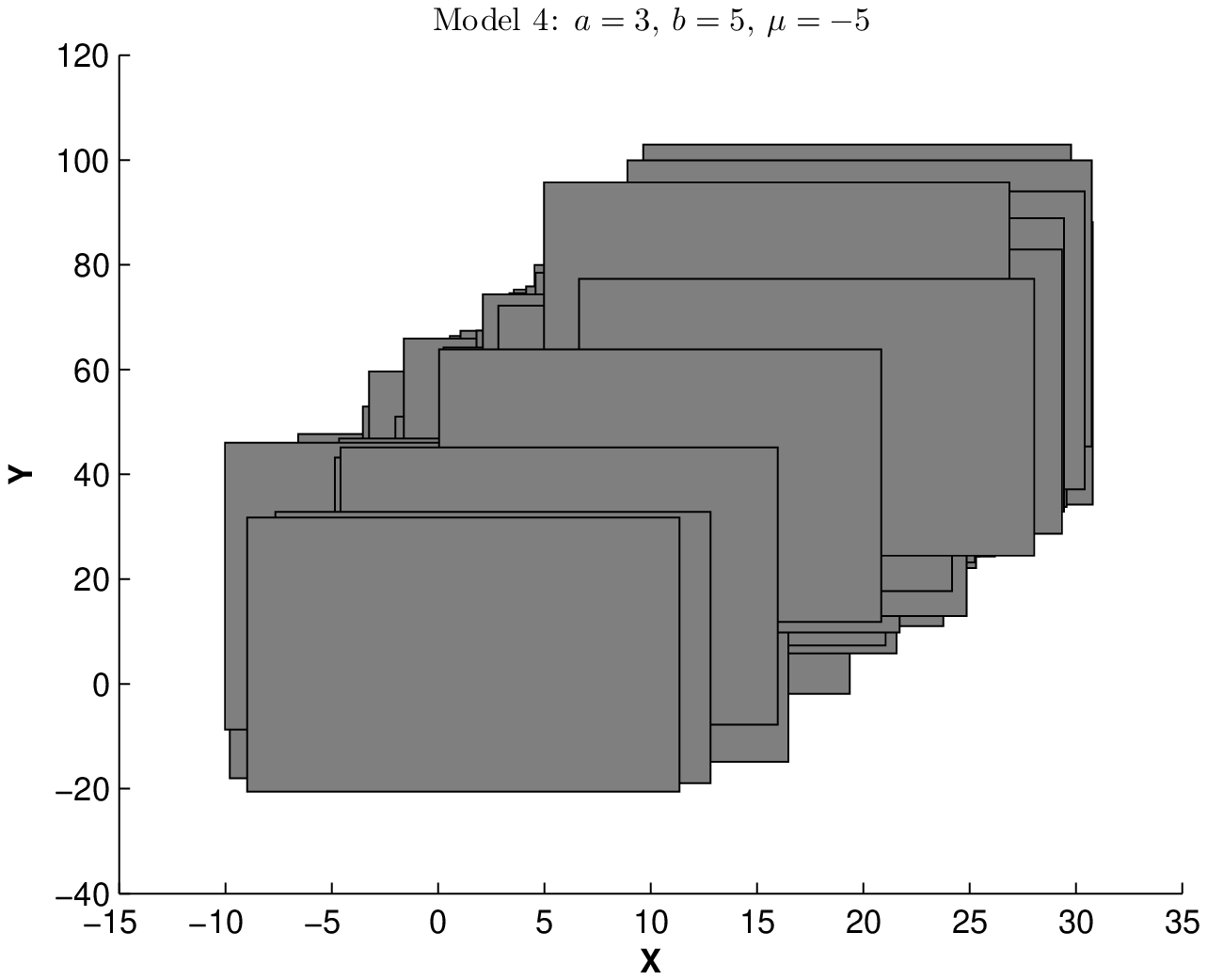}
\includegraphics[ height=1.900in, width=2.300in]{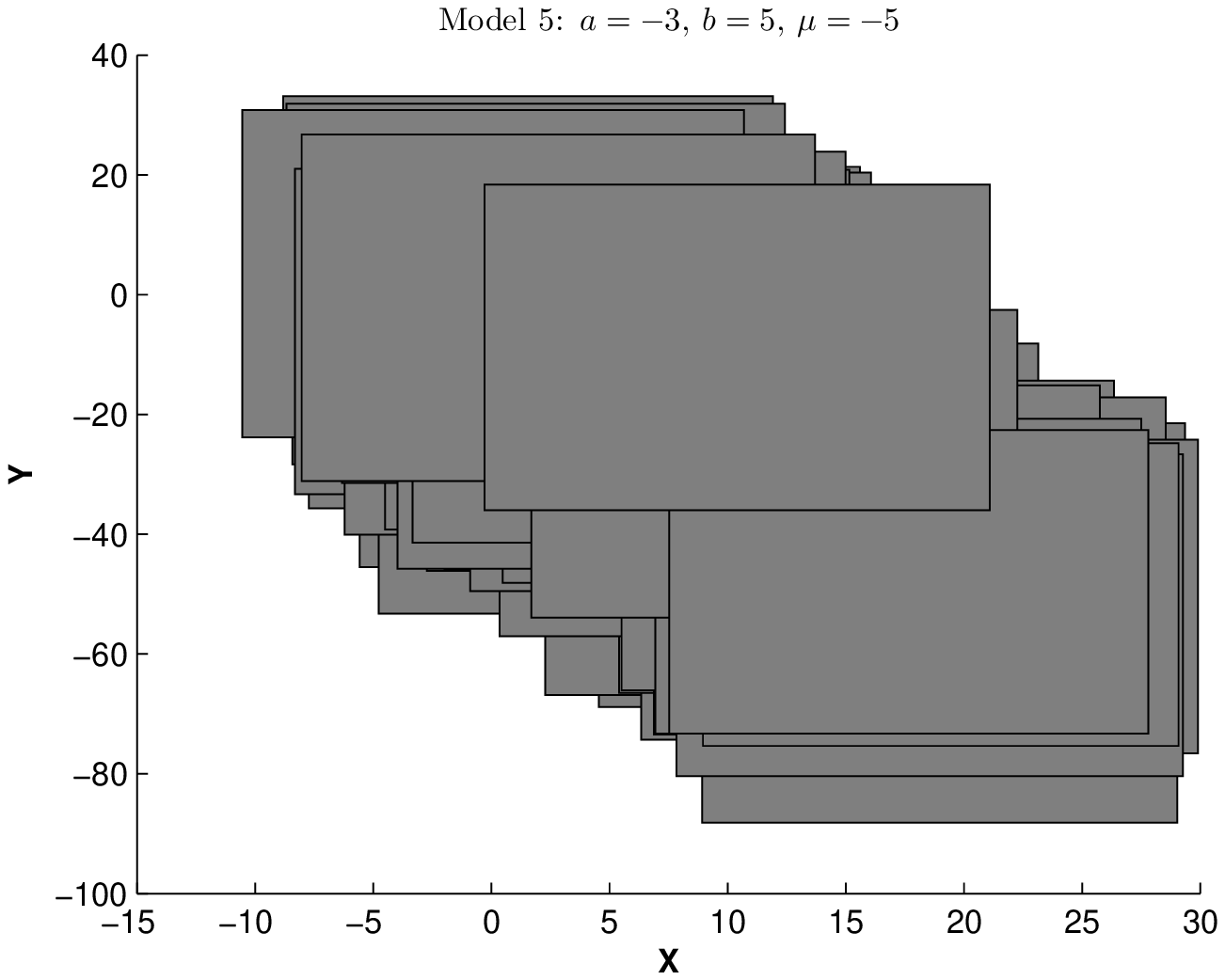}
\caption{Plots of simulated datasets from models 4 and 5, each with sample size $n=50$.}
\label{fig:sim-data-45}
\end{figure}

\begin{table}[htbp]
  \centering
  \caption{Mean results of AMSE on the validation set based on 1000 independent repetitions.}
  \bigskip
    \begin{tabular}{crrrrrrrr}
    \toprule
          & \multicolumn{1}{c}{} & \multicolumn{3}{c}{\textbf{CCRM}} & \multicolumn{1}{c}{\textbf{}} & \multicolumn{3}{c}{\textbf{Our Model}} \\
    \midrule
          & \multicolumn{1}{c}{n} & \multicolumn{1}{c}{Center} & \multicolumn{1}{c}{Radius} & \multicolumn{1}{c}{Average} & \multicolumn{1}{c}{} & \multicolumn{1}{c}{Center} & \multicolumn{1}{c}{Radius} & \multicolumn{1}{c}{Average} \\
          & \multicolumn{1}{c}{} & \multicolumn{1}{c}{} & \multicolumn{1}{c}{} & \multicolumn{1}{c}{} & \multicolumn{1}{c}{} & \multicolumn{1}{c}{} & \multicolumn{1}{c}{} & \multicolumn{1}{c}{} \\
    Model 1 & \multicolumn{1}{l}{20} & \multicolumn{1}{l}{0.1716} & \multicolumn{1}{l}{0.3134} & \multicolumn{1}{l}{\textbf{0.2425}} & \multicolumn{1}{l}{\textbf{}} & \multicolumn{1}{l}{0.1772} & \multicolumn{1}{l}{0.3374} & \multicolumn{1}{l}{0.2573} \\
          & \multicolumn{1}{l}{50} & \multicolumn{1}{l}{0.1181} & \multicolumn{1}{l}{0.2368} & \multicolumn{1}{l}{0.1775} & \multicolumn{1}{l}{} & \multicolumn{1}{l}{0.1116} & \multicolumn{1}{l}{0.2313} & \multicolumn{1}{l}{\textbf{0.1714}} \\
          & \multicolumn{1}{l}{100} & \multicolumn{1}{l}{0.1149} & \multicolumn{1}{l}{0.2241} & \multicolumn{1}{l}{0.1695} & \multicolumn{1}{l}{} & \multicolumn{1}{l}{0.1119} & \multicolumn{1}{l}{0.2219} & \multicolumn{1}{l}{\textbf{0.1669}} \\
          & \multicolumn{1}{l}{} & \multicolumn{1}{l}{} & \multicolumn{1}{l}{} & \multicolumn{1}{l}{} & \multicolumn{1}{l}{} & \multicolumn{1}{l}{} & \multicolumn{1}{l}{} & \multicolumn{1}{l}{} \\
    Model 2 & \multicolumn{1}{l}{20} & \multicolumn{1}{l}{0.3499} & \multicolumn{1}{l}{0.3244} & \multicolumn{1}{l}{\textbf{0.3372}} & \multicolumn{1}{l}{\textbf{}} & \multicolumn{1}{l}{0.3467} & \multicolumn{1}{l}{0.3294} & \multicolumn{1}{l}{0.338} \\
          & \multicolumn{1}{l}{50} & \multicolumn{1}{l}{0.2341} & \multicolumn{1}{l}{0.2356} & \multicolumn{1}{l}{0.2348} & \multicolumn{1}{l}{} & \multicolumn{1}{l}{0.2344} & \multicolumn{1}{l}{0.2318} & \multicolumn{1}{l}{\textbf{0.2331}} \\
          & \multicolumn{1}{l}{100} & \multicolumn{1}{l}{0.2263} & \multicolumn{1}{l}{0.2201} & \multicolumn{1}{l}{0.2232} & \multicolumn{1}{l}{} & \multicolumn{1}{l}{0.2203} & \multicolumn{1}{l}{0.22} & \multicolumn{1}{l}{\textbf{0.2201}} \\
          & \multicolumn{1}{l}{} & \multicolumn{1}{l}{} & \multicolumn{1}{l}{} & \multicolumn{1}{l}{} & \multicolumn{1}{l}{} & \multicolumn{1}{l}{} & \multicolumn{1}{l}{} & \multicolumn{1}{l}{} \\
    Model 3 & \multicolumn{1}{l}{20} & \multicolumn{1}{l}{0.1708} & \multicolumn{1}{l}{0.3367} & \multicolumn{1}{l}{0.2538} & \multicolumn{1}{l}{} & \multicolumn{1}{l}{0.1687} & \multicolumn{1}{l}{0.3241} & \multicolumn{1}{l}{\textbf{0.2464}} \\
          & \multicolumn{1}{l}{50} & \multicolumn{1}{l}{0.1192} & \multicolumn{1}{l}{0.2288} & \multicolumn{1}{l}{0.174} & \multicolumn{1}{l}{} & \multicolumn{1}{l}{0.1192} & \multicolumn{1}{l}{0.2246} & \multicolumn{1}{l}{\textbf{0.1719}} \\
          & \multicolumn{1}{l}{100} & \multicolumn{1}{l}{0.1128} & \multicolumn{1}{l}{0.2196} & \multicolumn{1}{l}{0.1662} & \multicolumn{1}{l}{} & \multicolumn{1}{l}{0.1101} & \multicolumn{1}{l}{0.219} & \multicolumn{1}{l}{\textbf{0.1646}} \\
          & \multicolumn{1}{l}{} & \multicolumn{1}{l}{} & \multicolumn{1}{l}{} & \multicolumn{1}{l}{} & \multicolumn{1}{l}{} & \multicolumn{1}{l}{} & \multicolumn{1}{l}{} & \multicolumn{1}{l}{\textbf{}} \\
    Model 4 & \multicolumn{1}{l}{20} & \multicolumn{1}{l}{0.3795} & \multicolumn{1}{l}{0.3499} & \multicolumn{1}{l}{0.3647} & \multicolumn{1}{l}{} & \multicolumn{1}{l}{0.125} & \multicolumn{1}{l}{0.3691} & \multicolumn{1}{l}{\textbf{0.247}} \\
          & \multicolumn{1}{l}{50} & \multicolumn{1}{l}{0.2802} & \multicolumn{1}{l}{0.2738} & \multicolumn{1}{l}{0.277} & \multicolumn{1}{l}{} & \multicolumn{1}{l}{0.0867} & \multicolumn{1}{l}{0.2734} & \multicolumn{1}{l}{\textbf{0.18}} \\
          & \multicolumn{1}{l}{100} & \multicolumn{1}{l}{0.258} & \multicolumn{1}{l}{0.2727} & \multicolumn{1}{l}{0.2653} & \multicolumn{1}{l}{} & \multicolumn{1}{l}{0.0808} & \multicolumn{1}{l}{0.2605} & \multicolumn{1}{l}{\textbf{0.1706}} \\
          & \multicolumn{1}{l}{} & \multicolumn{1}{l}{} & \multicolumn{1}{l}{} & \multicolumn{1}{l}{} & \multicolumn{1}{l}{} & \multicolumn{1}{l}{} & \multicolumn{1}{l}{} & \multicolumn{1}{l}{\textbf{}} \\
    Model 5 & \multicolumn{1}{l}{20} & \multicolumn{1}{l}{0.3519} & \multicolumn{1}{l}{0.3207} & \multicolumn{1}{l}{0.3363} & \multicolumn{1}{l}{} & \multicolumn{1}{l}{0.1204} & \multicolumn{1}{l}{0.3717} & \multicolumn{1}{l}{\textbf{0.2461}} \\
          & \multicolumn{1}{l}{50} & \multicolumn{1}{l}{0.2712} & \multicolumn{1}{l}{0.2799} & \multicolumn{1}{l}{0.2756} & \multicolumn{1}{l}{} & \multicolumn{1}{l}{0.0827} & \multicolumn{1}{l}{0.2681} & \multicolumn{1}{l}{\textbf{0.1754}} \\
          & \multicolumn{1}{l}{100} & \multicolumn{1}{l}{0.2558} & \multicolumn{1}{l}{0.2751} & \multicolumn{1}{l}{0.2655} & \multicolumn{1}{l}{} & \multicolumn{1}{l}{0.08} & \multicolumn{1}{l}{0.2552} & \multicolumn{1}{l}{\textbf{0.1676}} \\
          & \multicolumn{1}{l}{} & \multicolumn{1}{l}{} & \multicolumn{1}{l}{} & \multicolumn{1}{l}{} & \multicolumn{1}{l}{} & \multicolumn{1}{l}{} & \multicolumn{1}{l}{} & \multicolumn{1}{l}{\textbf{}} \\
    Model 6 & \multicolumn{1}{l}{50} & \multicolumn{1}{l}{0.0622} & \multicolumn{1}{l}{0.4288} & \multicolumn{1}{l}{0.2455} & \multicolumn{1}{l}{} & \multicolumn{1}{l}{0.0661} & \multicolumn{1}{l}{0.2536} & \multicolumn{1}{l}{\textbf{0.1599}} \\
          & \multicolumn{1}{l}{100} & \multicolumn{1}{l}{0.0596} & \multicolumn{1}{l}{0.3934} & \multicolumn{1}{l}{0.2265} & \multicolumn{1}{l}{} & \multicolumn{1}{l}{0.0606} & \multicolumn{1}{l}{0.237} & \multicolumn{1}{l}{\textbf{0.1488}} \\
          & \multicolumn{1}{l}{200} & \multicolumn{1}{l}{0.0565} & \multicolumn{1}{l}{0.3838} & \multicolumn{1}{l}{0.2201} & \multicolumn{1}{l}{} & \multicolumn{1}{l}{0.0593} & \multicolumn{1}{l}{0.2344} & \multicolumn{1}{l}{\textbf{0.1469}} \\
    \bottomrule
    \end{tabular}%
  \label{tab:sim-com}%
\end{table}%

\section{A real data application}
In this section, we apply our model to analyze the average temperature data for large US cities, which are provided by National Oceanic and Atmospheric Administration (NOAA) and are publicly available. The three data sets we obtained specifically are average temperatures for 51 large US cities in January, April, and July. Each observation contains the averages of minimum and maximum temperatures based on weather data collected from 1981 to 2010 by the NOAA National Climatic Data Center of the United States. July in general is the hottest month in the US. By this analysis, we aim to predict the summer (July) temperatures by those in the winter (January) and spring (April). Figure \ref{fig:real-data} plots the July temperatures versus those in January and April, respectively.   

\begin{figure}[ht]
\centering
\includegraphics[ height=2.000in, width=2.000in]{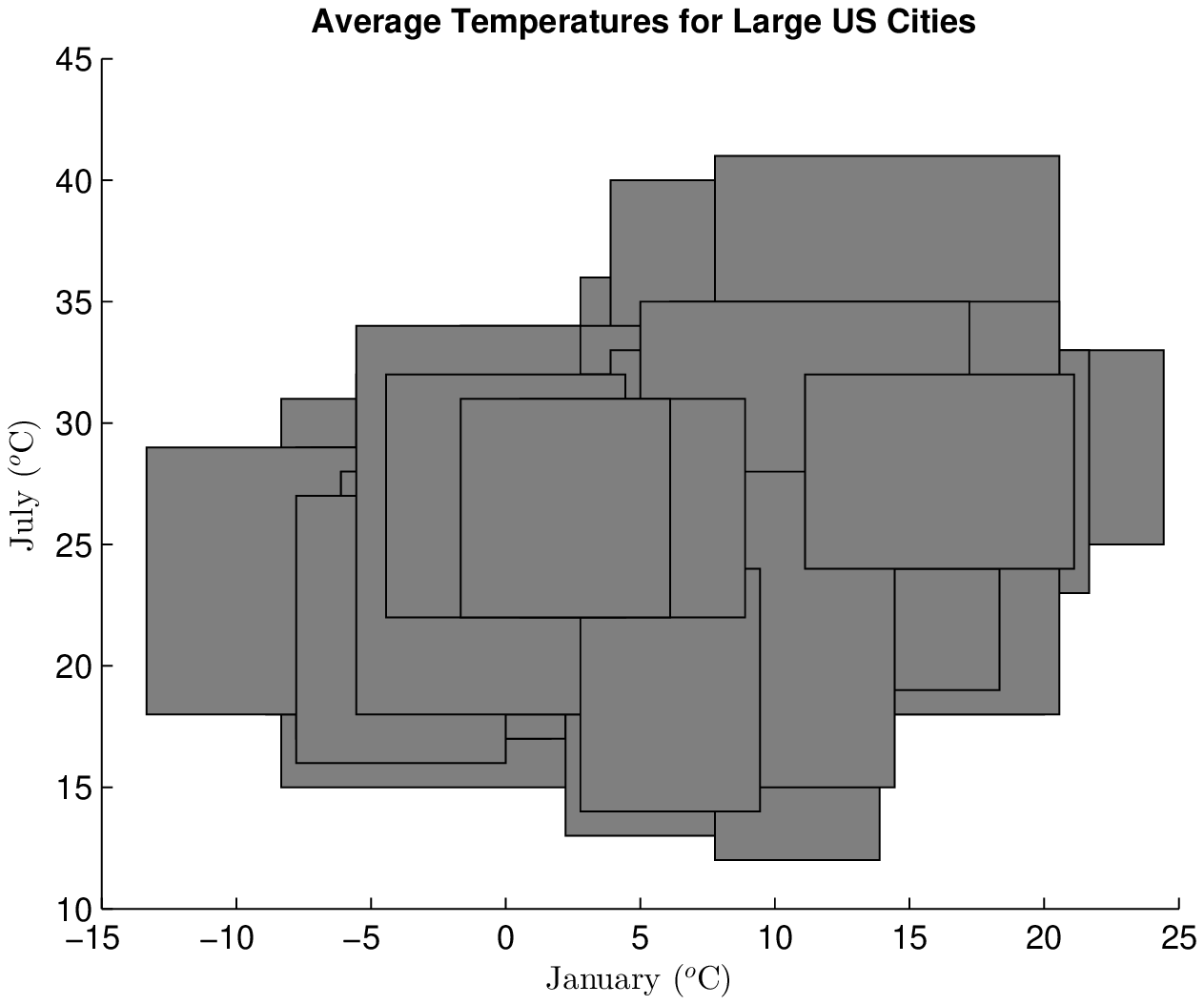}
\includegraphics[ height=2.000in, width=2.000in]{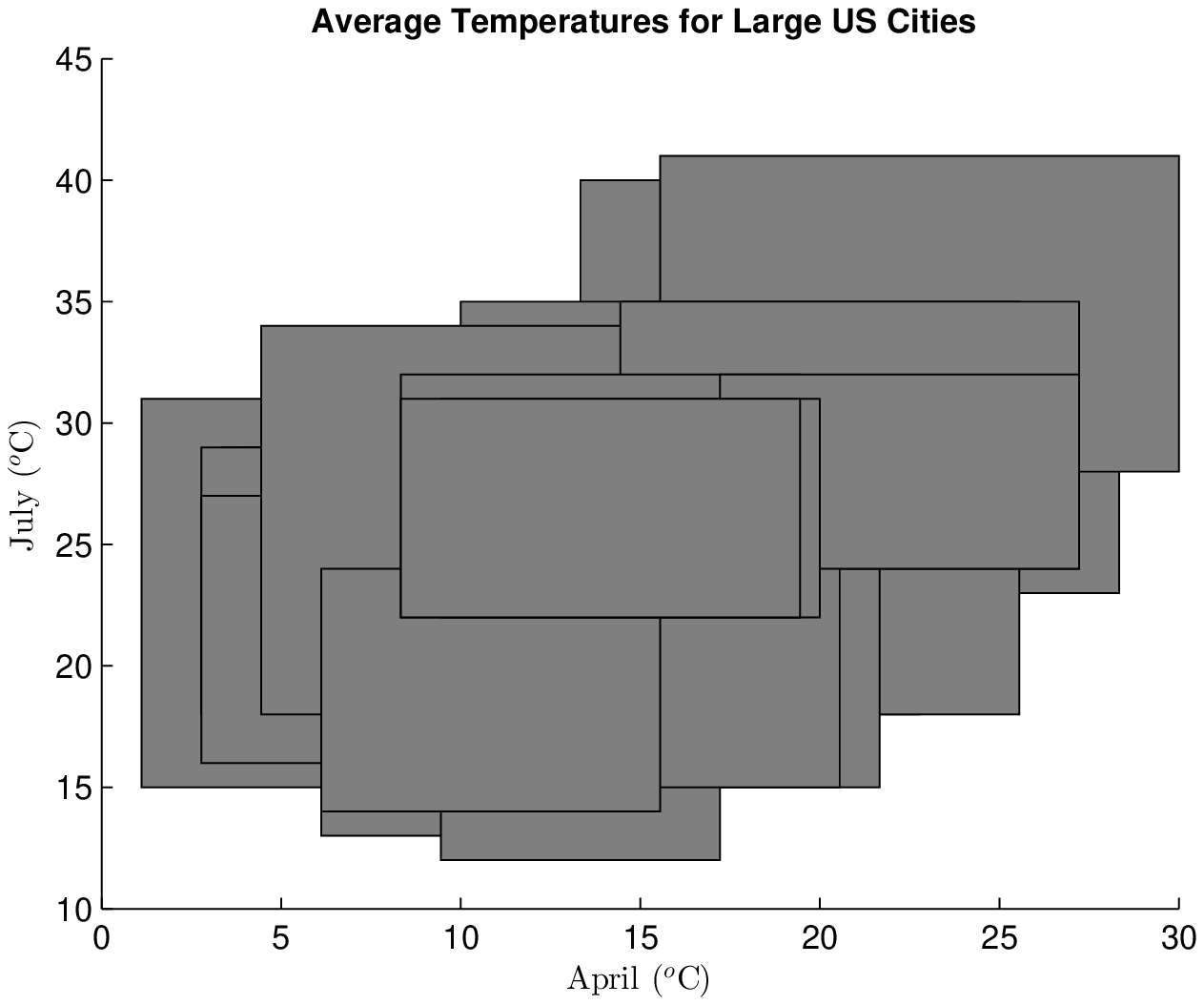}
\caption{Left: plot of July versus January temperatures. Right: plot of July versus April temperatures.}
\label{fig:real-data}
\end{figure}

The parameters are estimated according to (\ref{eqn:lse-1})-(\ref{eqn:lse-3}) as
\begin{eqnarray*}
  &&\hat{a}_1=-0.4831,\ \hat{a}_2=1.1926;\\
  &&\hat{b}=10.2510,\ \hat{\mu}=-3.7071.
\end{eqnarray*}
Denote by $T_{Jan}$, $T_{April}$, and $T_{July}$, the average temperatures in a US city in January, April, and July, respectively. 
The prediction for $T_{July}$ based on $T_{Jan}$ and $T_{April}$ is given by
\begin{eqnarray}
  \hat{T}^c_{July}&=&10.2510-0.4831 T^c_{Jan}+1.1926 T^c_{April},\label{July_pred_c}\\
  \hat{T}^r_{July}&=&-3.7071+0.4831 T^r_{Jan}+1.1926 T^r_{April}.\label{July_pred_r}
\end{eqnarray}
The three sums of squares are calculated to be  
\begin{eqnarray*}
  SST=663.8627;\ SSE=495.0874;\ SSR=168.7753.
\end{eqnarray*}
Therefore, the coefficient of determination is 
\begin{eqnarray*}
  R^2=1-\frac{SSR}{SST}=\frac{SSE}{SST}=0.7458. 
\end{eqnarray*} 
Finally, the variance parameters can be estimated as
\begin{eqnarray*}
  \hat{\sigma}^2_{\lambda}&=&\frac{1}{n-1}\sum_{i=1}^{n}\left(T_{July,i}^c-\hat{T}_{July,i}^c\right)^2=2.1708;\\
  \hat{\sigma}^2_{\eta}&=&\frac{1}{n-1}\sum_{i=1}^{n}\left(T_{July,i}^r-\hat{T}_{July,i}^r\right)^2=1.2047.
\end{eqnarray*}
Thus, by Theorem \ref{thm:pred-adjust}, an upper bound of $P\left(\hat{T}^r_{July,i}<0\right)$ on average is estimated to be
\begin{equation*}
  \frac{1}{n}\sum_{i=1}^{n}\frac{\hat{\sigma}^2_{\eta}}{\left(T^r_{July,i}\right)^2}=\frac{1.2047}{n}\sum_{i=1}^{n}\frac{1}{\left(T^r_{July,i}\right)^2}=0.047,
\end{equation*}
which is very small and reasonably ignorable. We calculate $\hat{T}^r_{July,i}$ for the entire sample and all of them are well above $0$. So, for this data, although $\hat{\mu}<0$ and it is possible to get negative predicted radius, it in fact never happens because the model has captured most of the variability. The empirical distributions of residuals are shown in Figure \ref{fig:real-residual}. Both distributions are centered at 0, with the center residual having a slightly bigger tail.  

\begin{figure}[ht]
\centering
\includegraphics[ height=2.200in, width=2.500in]{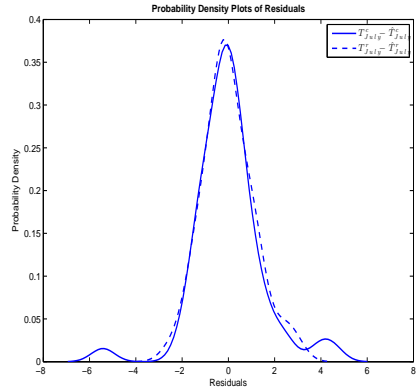}
\caption{Empirical probability density plots of the residuals for the center and radius. }
\label{fig:real-residual}
\end{figure}

\section{Concluding remarks}
We have rigorously studied linear regression for interval-valued data in the metric space $\left(\mathcal{K}_{\mathcal{C}}, \delta\right)$. The new model we introduces generalizes previous models in the literature so that the Hukuhara difference $Y_i\ominus \left(aX_i+b\right)$ needs not exist. Analogous to the classical linear regression, our model together with the LS estimation leads to a partition of the total sum of squares (SSR) into the explained sum of squares (SSE) and the residual sum of squares (SSR) in $\left(\mathcal{K}_{\mathcal{C}}, \delta\right)$, which implies that the residual is uncorrelated with the linear predictor in $\left(\mathcal{K}_{\mathcal{C}}, \delta\right)$. In addition, we have carried out theoretical investigations into the least squares estimation for the univariate model. It is shown that the LS estimates in $\left(\mathcal{K}_{\mathcal{C}}, \delta\right)$ are biased but the biases reduce to zero as the sample size tends to infinity. Therefore, a bias-correction technique for small sample estimation could be a good future topic. The simulation study confirms our theoretical findings and shows that the least squares estimators perform satisfactorily well for moderate sample sizes.

\section{Appendix: Proofs}
\subsection{Proof of Proposition \ref{prop:ls_solu}}
\begin{proof}
Differentiating $L$ with respect to $\mu$, $b$, and $a_j$, $j=1,\cdots,p$, respectively, and setting the derivatives to zero, we get
\begin{eqnarray}
  &&\frac{\partial L}{\partial \mu}\propto\sum_{i=1}^{n}\left(\hat{Y}_i^r-Y_i^r\right)=0,\label{lse-1}\\
  &&\frac{\partial L}{\partial b}\propto\sum_{i=1}^{n}\left(\hat{Y}_i^c-Y_i^c\right)=0,\label{lse-2}\\
  &&\frac{\partial L}{\partial a_k}\propto\sum_{i=1}^{n}\left(\hat{Y}_i^c-Y_i^c\right)X_{k,i}^c
  +\sum_{i=1}^{n}\left(\hat{Y}_i^r-Y_i^r\right)sgn\left(a_k\right)X_{k,i}^r=0,\label{lse-3}\\
  && k=1,\cdots, p.\nonumber
\end{eqnarray}
Equations (\ref{lse-1})-(\ref{lse-2}) yield
\begin{eqnarray}
  b&=&\frac{1}{n}\sum_{i=1}^{n}Y_i^c-\frac{1}{n}\sum_{j=1}^{p}a_j\sum_{i=1}^{n}X_{j,i}^{c}
	=\overline{Y^c}-\sum_{j=1}^{p}a_j\overline{X_{j}^c},\label{lse-4}\\
	\mu&=&\frac{1}{n}\sum_{i=1}^{n}Y_i^r-\frac{1}{n}\sum_{j=1}^{p}|a_j|\sum_{i=1}^{n}X_{j,i}^{r}
	=\overline{Y^r}-\sum_{j=1}^{p}|a_j|\overline{X_{j}^r}.\label{lse-5}
\end{eqnarray}
Equations (\ref{eqn:lse-1}) are obtained by plugging (\ref{lse-4})-(\ref{lse-5}) into (\ref{lse-3}), and equations (\ref{eqn:lse-2})-(\ref{eqn:lse-3}) follow from (\ref{lse-4})-(\ref{lse-5}). This completes the proof. 
\end{proof}

\subsection{Proof of Theorem \ref{thm:ss}}
\begin{proof}
According to definitions (\ref{def:sst})-(\ref{def:ssr}),
\begin{eqnarray}
  SST&=&\sum_{i=1}^{n}\left[\left(Y_i^c-\hat{Y}_i^c+\hat{Y}_i^c-\overline{Y^c}\right)^2+\left(Y_i^r-\hat{Y}_i^r+\hat{Y}_i^r-\overline{Y^r}\right)^2\right]\nonumber\\
  &=&SSE+SSR+2\sum_{i=1}^{n}\left[\left(Y_i^c-\hat{Y}_i^c\right)\left(\hat{Y}_i^c-\overline{Y^c}\right)
  +\left(Y_i^r-\hat{Y}_i^r\right)\left(\hat{Y}_i^r-\overline{Y^r}\right)\right]\nonumber\\
  &=&SSE+SSR+2\sum_{i=1}^{n}\left[\left(Y_i^c-\hat{Y}_i^c\right)\hat{Y}_i^c+\left(Y_i^r-\hat{Y}_i^r\right)\hat{Y}_i^r\right].\label{ss:eqn-1}
\end{eqnarray}
The last equation is due to (\ref{lse-1})-(\ref{lse-2}). Further in view of (\ref{exp-c})-(\ref{exp-r}) and (\ref{lse-3}), we have
\begin{eqnarray*}
  &&\sum_{i=1}^{n}\left[\left(Y_i^c-\hat{Y}_i^c\right)\hat{Y}_i^c+\left(Y_i^r-\hat{Y}_i^r\right)\hat{Y}_i^r\right]\\
  &=&\sum_{i=1}^{n}\left[\left(Y_i^c-\hat{Y}_i^c\right)\sum_{j=1}^{p}a_jX_{j,i}^c+\left(Y_i^r-\hat{Y}_i^r\right)\sum_{j=1}^{p}|a_j|X_{j,i}^r\right]\\
  &=&\sum_{j=1}^{p}a_j\sum_{i=1}^{n}\left[\left(Y_i^c-\hat{Y}_i^c\right)X_{j,i}^c+\left(Y_i^r-\hat{Y}_i^r\right)sgn(a_j)X_{j,i}^r\right]\\
  &=&0.
\end{eqnarray*}
This together with (\ref{ss:eqn-1}) completes the proof. 
\end{proof}

\subsection{Proof of Theorem \ref{thm:pred-adjust}}
\begin{proof}
Notice that 
\begin{equation*}
  P\left(\hat{Y}_i^r<0\right)=P\left(\hat{Y}_i^r-Y_i^r<-Y_i^r\right)
  \leq P\left(|\hat{Y}_i^r-Y_i^r|>Y_i^r\right).
\end{equation*}
An application of Markov's inequality completes the proof.
\end{proof}

\subsection{Proof of Theorem \ref{thm:ls_solu}}
\begin{proof}
Parts \textbf{i}, \textbf{ii} and \textbf{iii} are obvious from Proposition \ref{prop:ls_solu}. Part \textbf{iv} follows from Lemma \ref{lem:cov_r} in Appendix II.
\end{proof}

\subsection{Proof of Proposition \ref{prop:ls_exp}}
\begin{proof}
We prove the cases $a\geq 0$ and $a<0$ separately. To simplify notations, we will use $E\left(\cdot\right)$ throughout the proof, but the expectation should be interpreted as being conditioned on $X$.\\

\textbf{Case I: $a\geq 0$}.\\

From Lemma \ref{lem:cov_est}, we have
\begin{align*}
&a^+-a\\
=&\frac{\sum_{i<j}(X_i^c-X_j^c)(Y_i^c-Y_j^c)+\sum_{i<j}(X_i^r-X_j^r)(Y_i^r-Y_j^r)}{\sum_{i<j}(X_i^c-X_j^c)^2+\sum_{i<j}(X_i^r-X_j^r)^2}-a\\
=&\frac{\sum_{i<j}(X_i^c-X_j^c)\left[ (Y_i^c-Y_j^c)-a(X_i^c-X_j^c) \right] +\sum_{i<j}(X_i^r-X_j^r) \left[ (Y_i^r-Y_j^r)-a(X_i^r-X_j^r) \right]}{\sum_{i<j}(X_i^c-X_j^c)^2+\sum_{i<j}(X_i^r-X_j^r)^2}\\
=&\frac{\sum_{i<j}(X_i^c-X_j^c)(\lambda_i-\lambda_j)+\sum_{i<j}(X_i^r-X_j^r)(\eta_i-\eta_j)}
{\sum_{i<j}(X_i^c-X_j^c)^2+\sum_{i<j}(X_i^r-X_j^r)^2}.\\
\end{align*}
This immediately yields 
\begin{equation}\label{eqn-1}
  E\left(a^+-a\right)=0. 
\end{equation}  
Similarly, 
\[
a^--a=\frac{\sum_{i<j}(X_i^c-X_j^c)(\lambda_i-\lambda_j)-\sum_{i<j}(X_i^r-X_j^r)[2a(X_i^r-X_j^r)+(\eta_i-\eta_j)]}
{\sum_{i<j}(X_i^c-X_j^c)^2+\sum_{i<j}(X_i^r-X_j^r)^2},
\]
and consequently,
\begin{equation}\label{eqn-2}
  E\left(a^--a\right)=-\frac{2aS^2(X^r)}{S^2(X^c)+S^2(X^r)}.
\end{equation}
Notice now 
\begin{eqnarray}
E(\hat{a}-a)
&=&E(\hat{a}-a)I_{\{\hat{a}=a^+\}}+E(\hat{a}-a)I_{\{\hat{a}=a^-\}}\nonumber\\
&=&\int_{\{\hat{a}=a^+\}}(\hat{a}-a) \mathrm{d}\mathbb{P}+\int_{\{\hat{a}=a^-\}}(\hat{a}-a) \mathrm{d}\mathbb{P}\nonumber\\
&=&\int_{\{\hat{a}=a^+\}}(a^+-a) \mathrm{d}\mathbb{P}+\int_{\{\hat{a}=a^-\}}(a^--a) \mathrm{d}\mathbb{P}\nonumber\\
&=&\left[\int_{\{\hat{a}=a^+\}}(a^+-a) \mathrm{d}\mathbb{P}+\int_{\{\hat{a}=a^-\}}(a^+-a) \mathrm{d}\mathbb{P} \right]\\
&& -\left[\int_{\{\hat{a}=a^-\}}(a^+-a) \mathrm{d}\mathbb{P}+\int_{\{\hat{a}=a^-\}}(a^--a) \mathrm{d}\mathbb{P} \right]\nonumber\\
&=&E\left(a^+-a\right)-\int_{\{\hat{a}=a^-\}}(a^+-a^-) \mathrm{d}\mathbb{P}\nonumber\\
&=&-E(a^+-a^-)I_{\{\hat{a}=a^-\}}\label{eqn-3}.
\end{eqnarray}
Here, equation (\ref{eqn-3}) is due to (\ref{eqn-1}). Recall that
\begin{eqnarray}
a^+-a^-&=&\frac{2\sum_{i<j}(X_i^r-X_j^r)(Y_i^r-Y_j^r)}{\sum_{i<j}(X_i^c-X_j^c)^2+\sum_{i<j}(X_i^r-X_j^r)^2}\nonumber\\
&=&\frac{2\sum_{i<j}(X_i^r-X_j^r)\left[a(X_i^r-X_j^r)+(\eta_i-\eta_j)\right]}
{\sum_{i<j}(X_i^c-X_j^c)^2+\sum_{i<j}(X_i^r-X_j^r)^2},\label{a+-a-}
\end{eqnarray}
since $a\geq0$. Therefore,
\begin{eqnarray}
E(\hat{a}-a)
&=&-E\left\{\frac{2\sum_{i<j}(X_i^r-X_j^r)\left[a(X_i^r-X_j^r)+(\eta_i-\eta_j)\right]}
{\sum_{i<j}(X_i^c-X_j^c)^2+\sum_{i<j}(X_i^r-X_j^r)^2}\right\}I_{\{\hat{a}=a^-\}}\nonumber\\
&=&-\frac{2\sum_{i<j}\left[|a|(X_i^r-X_j^r)^2P(\hat{a}=a^-)+(X_i^r-X_j^r)E(\eta_i-\eta_j)I_{\{\hat{a}=a^-\}}\right]}
{\sum_{i<j}(X_i^c-X_j^c)^2+\sum_{i<j}(X_i^r-X_j^r)^2}\nonumber\\
&=&-\frac{2\sum_{i<j}(X_i^r-X_j^r)^2P(\hat{a}=a^-)}{\sum_{i<j}(X_i^c-X_j^c)^2+\sum_{i<j}(X_i^r-X_j^r)^2}\nonumber\\
&=&-\frac{2aS^2(X^r)}{S^2(X^c)+S^2(X^r)}P(\hat{a}=a^-).\label{rst-1}
\end{eqnarray}
Similar to the preceding arguments,
\begin{eqnarray*}
E(|\hat{a}|-|a|)&=&E(|\hat{a}|-a)=E(|\hat{a}|-a)I_{\{\hat{a}=a^+\}}+E(|\hat{a}|-a)I_{\{\hat{a}=a^-\}}\\
&=&\int_{\{\hat{a}=a^+\}}(a^+-a) \mathrm{d}\mathbb{P}+\int_{\{\hat{a}=a^-\}}(-a^--a) \mathrm{d}\mathbb{P}\\
&=&E(a^+-a)-\int_{\{\hat{a}=a^-\}}(a^+-a) \mathrm{d}\mathbb{P}-\int_{\{\hat{a}=a^-\}}(-a^-+a) \mathrm{d}\mathbb{P}\\
&=&-E(a^++a^-)I{\{\hat{a}=a^-\}}.\\
\end{eqnarray*}
Recall again that
\begin{eqnarray}
a^++a^-
&=&\frac{2\sum_{i<j}(X_i^c-X_j^c)\left[a(X_i^c-X_j^c)+(\lambda_i-\lambda_j)\right]}{S^2\left(X^c\right)+S^2\left(X^r\right)}.\label{a++a-}
\end{eqnarray}
It follows that
\begin{eqnarray}
E(|\hat{a}|-|a|)
&=&-\frac{2aS^2(X^c)}{S^2(X^c)+S^2(X^r)}P(\hat{a}=a^-).\label{rst-2}
\end{eqnarray}

\textbf{Case II: $a<0$}\\

In this case, we have
\begin{align*}
&a^+-a
=\frac{\sum_{i<j}(X_i^c-X_j^c)(\lambda_i-\lambda_j)+\sum_{i<j}(X_i^r-X_j^r)[-2a(X_i^r-X_j^r)+(\eta_i-\eta_j)]}{S^2(X^c)+S^2(X^r)},\\
&a^--a
=\frac{\sum_{i<j}(X_i^c-X_j^c)(\lambda_i-\lambda_j)-\sum_{i<j}(X_i^r-X_j^r)(\eta_i-\eta_j)}{S^2(X^c)+S^2(X^r)}.\\
\end{align*}
These imply
\begin{align*}
&E(a^+-a)=-\frac{2aS^2(X^r)}{S^2(X^c)+S^2(X^r)},\\
&E(a^--a)=0.\\
\end{align*}
Similar to the case of $a\geq 0$, we obtain 
\begin{align*}
&E(\hat{a}-a)=E(a^+-a^-)I_{\{\hat{a}=a^+\}},\\
&E(|\hat{a}|-|a|)=E(a^++a^-)I_{\{\hat{a}=a^+\}}.\\
\end{align*}
These, together with (\ref{a+-a-}) and (\ref{a++a-}), imply,
\begin{eqnarray}
&&E(\hat{a}-a)=-\frac{2aS^2(X^r)}{S^2(X^c)+S^2(X^r)}P(\hat{a}=a^+),\label{rst-3}\\
&&E(|\hat{a}|-|a|)=\frac{2aS^2(X^c)}{S^2(X^c)+S^2(X^r)}P(\hat{a}=a^+).\label{rst-4}
\end{eqnarray}
The desired result follows from (\ref{rst-1}), (\ref{rst-2}), (\ref{rst-3}) and (\ref{rst-4}).
\end{proof}

\subsection{Proof of Theorem \ref{thm:ls_consist}}
\begin{proof}
From (\ref{b+}) and (\ref{b-}),
\begin{equation*}
E(\hat{b}|X)=E(\overline{Y^c}-\hat{a}\overline{X^c}|X)
=E(a\overline{X^c}+b+\overline{\lambda}-\hat{a}\overline{X^c}|X)
=\overline{X^c} E(a-\hat{a}|X)+b.
\end{equation*}
Similarly, from (\ref{mu+}) and (\ref{mu-}),
\begin{equation*}
E(\hat{\mu}|X)=E(\overline{Y^r}-|\hat{a}|\overline{X^r}|X)
=E(|a|\overline{X^r}+\overline{\eta}+\overline{\lambda}-|\hat{a}|\overline{X^r}|X)
=\overline{X^r} E(|a|-|\hat{a}|\big|X)+\mu.
\end{equation*}
Hence, the desired result follows by Proposition \ref{prop:ls_exp} and Lemma \ref{lem:sign-consist} in the Appendix.
\end{proof}

\section{Appendix II: Lemmas}
\begin{lemma}\label{lem:cov_r}
Assume model (\ref{mod-1**})-(\ref{mod-2**}) and $\text{Var}\left(X^r\right)<\infty$. Then $\text{Cov}(X^r,Y^r) \geq 0$. Consequently, $S(X^r,Y^r) \geq 0 $ with probability converging to 1. 
\end{lemma}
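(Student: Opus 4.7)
The plan is to split the claim into the population statement $\text{Cov}(X^r, Y^r) \geq 0$ and the sample statement $P(S(X^r, Y^r) \geq 0) \to 1$, handling each in turn. The population part is essentially a one-line algebraic consequence of the model equation (\ref{mod-2**}); the sample part is then a standard law-of-large-numbers promotion.

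For the population covariance, I would substitute $Y^r = |a| X^r + \eta$ directly into $\text{Cov}(X^r, Y^r)$ and use bilinearity to obtain $\text{Cov}(X^r, Y^r) = |a|\,\text{Var}(X^r) + \text{Cov}(X^r, \eta)$. Since $\eta$ is an error term uncorrelated with the regressor (a standing assumption in this regression framework, consistent with $E(\eta)=\mu$ being a free parameter and $\eta$ having fixed finite variance $\sigma^2_\eta$ independent of $X$), the second summand vanishes, leaving $|a|\,\text{Var}(X^r) \geq 0$. The finite-variance hypothesis on $X^r$ is exactly what is needed to keep these covariances well-defined.

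For the sample version, I would apply the weak law of large numbers to each of the averages composing $S(X^r, Y^r) = \frac{1}{n}\sum_i X_i^r Y_i^r - \overline{X^r}\,\overline{Y^r}$. Using the finite second moments of $X^r$ and $\eta$ (and hence of $Y^r$), each average converges in probability to its population counterpart, yielding $S(X^r, Y^r) \to |a|\,\text{Var}(X^r)$ in probability. When this limit is strictly positive, it follows immediately that $P(S(X^r, Y^r) \geq 0) \to 1$.

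The main subtlety I expect is the borderline case in which the limiting covariance is exactly zero, namely $a = 0$ or $\text{Var}(X^r) = 0$. In that degenerate scenario the convergence $S(X^r, Y^r) \to 0$ does not by itself force $P(S(X^r, Y^r) \geq 0) \to 1$, since $S$ could oscillate about zero. I would therefore either make explicit the nondegeneracy assumption $|a|\,\text{Var}(X^r) > 0$ (which is implicit in any meaningful regression setting and is compatible with the positive-variance specification in the model) or, if one wishes to cover the degenerate case, observe that when $a = 0$ the random variable $Y^r - |a|X^r = \eta$ contributes only a mean-zero centered term to $S$, so the sign question reduces to something harmless. This is the single spot requiring care; the rest of the argument is routine.
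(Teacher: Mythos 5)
Your proof is correct and follows essentially the same route as the paper's: expand $\text{Cov}(X^r,Y^r)$ using the radius equation (\ref{mod-2**}) and the uncorrelatedness of $X^r$ and $\eta$ to get $|a|\,\text{Var}(X^r)\geq 0$, then promote this to the sample covariance by a law of large numbers. Your caveat about the borderline case $|a|\,\text{Var}(X^r)=0$ is well taken and is actually a point the paper's own proof passes over silently; when $a=0$ the conclusion $P(S(X^r,Y^r)\geq 0)\to 1$ genuinely fails (the suitably normalized sample covariance is asymptotically centered at zero, so the probability tends to $1/2$), so the nondegeneracy condition you flag is needed rather than optional.
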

\begin{proof}
According to (\ref{mod-2**}),
\begin{eqnarray}
  \text{Cov}\left(X^r,Y^r\right)
  &=& E\left(X^rY^r\right)-E\left(X^r\right)E\left(Y^r\right)\nonumber\\
  &=& E\left[X^r\left(|a|X^r+\eta_1\right)\right]-E\left(X^r\right)E\left(|a|X^r+\eta_1\right)\nonumber\\
  &=& |a|E\left(X^r\right)^2+\mu E\left(X^r\right)-|a|\left[E\left(X^r\right)\right]^2-\mu E\left(X^r\right)\nonumber\\
  &=& |a|\text{Var}\left(X^r\right)\nonumber\\
  &\geq& 0,\label{cov_true}
\end{eqnarray}
provided that $\text{Var}\left(X^r\right)<\infty$. 
By the SLLN, 
\begin{equation}\label{cov_sample}
  S\left(X^r,Y^r\right)\to\text{Cov}\left(X^r,Y^r\right)\ a.s..
\end{equation}
(\ref{cov_sample}) together with (\ref{cov_true}) completes the proof.
\end{proof}

\begin{lemma}\label{lem:cov_est}
The following are true for $v\in\left\{c,r\right\}$:
\begin{align}
S\left(X^v, Y^v\right)&=\frac{1}{n^2}\sum_{i<j}(X_i^v-X_j^v)(Y_i^v-Y_j^v),\label{est_cov}\\
S^2\left(X^v\right)&=\frac{1}{n^2}\sum_{i<j}(X_i^v-X_j^v)^2.\label{est_var}
\end{align}
\end{lemma}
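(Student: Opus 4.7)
The plan is to prove (\ref{est_cov}) first by a direct algebraic manipulation, and then obtain (\ref{est_var}) as the special case $Y^v = X^v$. The entire lemma is an algebraic identity connecting the bilinear sample-covariance formula (as defined in the paper) to the pair-difference representation, so the proof is purely computational with no probabilistic content.

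First I would expand the summand on the right-hand side of (\ref{est_cov}):
$$
(X_i^v-X_j^v)(Y_i^v-Y_j^v) = X_i^vY_i^v + X_j^vY_j^v - X_i^vY_j^v - X_j^vY_i^v.
$$
Because this expression is symmetric under the exchange $i \leftrightarrow j$ and vanishes when $i=j$, I can replace the sum over $i<j$ by one-half of the unrestricted double sum $\sum_{i,j=1}^{n}$. The four resulting double sums then collapse in the obvious way: the ``diagonal'' terms $\sum_{i,j}X_i^vY_i^v$ and $\sum_{i,j}X_j^vY_j^v$ each contribute $n\sum_i X_i^vY_i^v$, while the ``off-diagonal'' terms $\sum_{i,j}X_i^vY_j^v$ and $\sum_{i,j}X_j^vY_i^v$ each factor into $\bigl(\sum_i X_i^v\bigr)\bigl(\sum_j Y_j^v\bigr)$.

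Combining these, I would obtain
$$
\sum_{i<j}(X_i^v-X_j^v)(Y_i^v-Y_j^v) = n\sum_{i=1}^{n}X_i^vY_i^v - \left(\sum_{i=1}^{n}X_i^v\right)\!\left(\sum_{i=1}^{n}Y_i^v\right),
$$
and dividing through by $n^2$ reproduces exactly the bilinear definition of $S(X^v,Y^v)$ stated in Section 2.2. This establishes (\ref{est_cov}). Finally, setting $Y^v = X^v$ in (\ref{est_cov}) yields (\ref{est_var}) at once, since $S(X^v,X^v) = S^2(X^v)$ by definition.

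There is no real obstacle here; the proof is a straightforward bookkeeping exercise, and the only thing to be careful about is the factor of $1/2$ when converting between the pair sum $\sum_{i<j}$ and the full double sum $\sum_{i,j}$.
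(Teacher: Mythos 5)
Your proposal is correct and follows essentially the same route as the paper: both expand $(X_i^v-X_j^v)(Y_i^v-Y_j^v)$, collapse the resulting sums to $n\sum_i X_i^vY_i^v-\bigl(\sum_i X_i^v\bigr)\bigl(\sum_i Y_i^v\bigr)$, and obtain (\ref{est_var}) by substituting $Y^v=X^v$. The only difference is bookkeeping --- you symmetrize $\sum_{i<j}$ into half the full double sum, while the paper counts the $i<j$ pairs directly --- and both are valid.
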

\begin{proof}
To prove \eqref{est_cov},
\begin{align*}
&\sum_{i<j}(X_i^v-X_j^v)(Y_i^v-Y_j^v)=\sum_{i<j}(X_i^vY_i^v-X_i^vY_j^v-X_j^vY_i^v+X_j^vY_j^v)\\
&=\sum_{i<j}(X_i^vY_i^v+X_j^vY_j^v)-\sum_{i<j}(X_i^vY_j^v+X_j^vY_i^v)\\
&=(n-1)\sum_{i=1}^nX_i^vY_i^v-[(\sum_{i=1}^nX_i^v)(\sum_{i=1}^nY_i^v)-\sum_{i=1}^nX_i^vY_i^v]\\
&=n\sum_{i=1}^nX_i^vY_i^v-(\sum_{i=1}^nX_i^v)(\sum_{i=1}^nY_i^v)=n^2S\left(X^v, Y^v\right).\\
\end{align*}
\eqref{est_var} follows by replacing $Y_i^v$ with $X_i^v$ and $Y_i^v$ with $X_j^v$ in the above calculations.
\end{proof}

\begin{lemma}\label{lem:sign-consist}
Assume model (\ref{mod-1**})-(\ref{mod-2**}). Assume in addition that $S^2\left(X^c\right)=O(1)$ and $S^2\left(X^r\right)=O(1)$. Let $\left\{\hat{a}, \hat{b}, \hat{\mu}\right\}$ be the least squares solution defined in (\ref{def-ls}). Then
\begin{eqnarray*}
  && P\left(\hat{a}=a^{-}|a\geq 0\right)\to 0,\\
  && P\left(\hat{a}=a^{+}|a<0\right)\to 0,
\end{eqnarray*}
as $n\to\infty$.
\end{lemma}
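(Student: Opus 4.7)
My plan is to reduce the event $\{\hat{a}=a^-\}$ (under $a\geq 0$) to the event $\{S(X^c,Y^c)\leq 0\}$, and then to use Chebyshev together with the model equations to show that the latter has vanishing probability.

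First I would use Theorem~\ref{thm:ls_solu} to enumerate the only two subcases in which $\hat{a}=a^-$ can occur when $a\geq 0$: case (\textbf{ii}), where $|S(X^c,Y^c)|>|S(X^r,Y^r)|$ and $S(X^c,Y^c)<0$; and case (\textbf{iii}), where $|S(X^c,Y^c)|<|S(X^r,Y^r)|$, $S(X^r,Y^r)>0$, and $L(a^-)\leq L(a^+)$. Case (ii) already forces $S(X^c,Y^c)<0$. For case (iii), I would plug the closed forms (\ref{a+})--(\ref{mu-}) into $L$ and use $|a^+|=a^+$, $|a^-|=-a^-$ (both valid because $|S(X^c,Y^c)|<S(X^r,Y^r)$) to derive the identity
\begin{equation*}
L(a^-)-L(a^+)\;=\;\frac{4n\,S(X^c,Y^c)\,S(X^r,Y^r)}{S^2(X^c)+S^2(X^r)}.
\end{equation*}
Since $S(X^r,Y^r)>0$ in case (iii), the inequality $L(a^-)\leq L(a^+)$ again forces $S(X^c,Y^c)\leq 0$. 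Hence in both subcases $\{\hat{a}=a^-\}\subseteq\{S(X^c,Y^c)\leq 0\}$.

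To conclude, I would use model (\ref{mod-1**}) to decompose
\begin{equation*}
S(X^c,Y^c)\;=\;a\,S^2(X^c)+\frac{1}{n}\sum_{i=1}^n(X_i^c-\overline{X^c})(\lambda_i-\overline{\lambda}).
\end{equation*}
Conditional on $X$, the second term has mean $0$ and variance $\sigma_\lambda^2\,S^2(X^c)/n=O(1/n)$ by the hypothesis $S^2(X^c)=O(1)$, so by Chebyshev it vanishes in probability. Under $a>0$ with $S^2(X^c)$ bounded below away from $0$ asymptotically (which holds in the i.i.d.\ setting with $\text{Var}(X^c)>0$), the leading term $a\,S^2(X^c)$ is eventually positive, giving $P(S(X^c,Y^c)\leq 0)\to 0$ and hence $P(\hat{a}=a^-)\to 0$. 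The argument for $P(\hat{a}=a^+\mid a<0)\to 0$ is symmetric, with $S(X^c,Y^c)\to a\,\text{Var}(X^c)<0$ playing the role above. The main technical obstacle I anticipate is the algebraic derivation of the $L(a^-)-L(a^+)$ identity, which requires careful tracking of the signs of $|a^\pm|$; once that is in hand, the probabilistic tail estimate is routine. The boundary value $a=0$ is not genuinely covered by this bound, but is also not needed downstream, since the bias formula in Proposition~\ref{prop:ls_exp} carries an explicit factor of $a$ that absorbs any sign-error contribution there.
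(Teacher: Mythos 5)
Your proof is correct, and it takes a genuinely different route from the paper's. The paper first disposes of the event $\{S(X^c,Y^c)<0\}$ via the SLLN, and then handles the remaining possibility $\{S(X^r,Y^r)>S(X^c,Y^c)>0,\ L(a^+,b^+,\mu^+)>L(a^-,b^-,\mu^-)\}$ by a page-long computation showing that $\tfrac{1}{n}\left[L(a^+,b^+,\mu^+)-L(a^-,b^-,\mu^-)\right]$ converges almost surely to a strictly negative constant. Your exact identity $L(a^-)-L(a^+)=4n\,S(X^c,Y^c)\,S(X^r,Y^r)/\left[S^2(X^c)+S^2(X^r)\right]$ (valid precisely when $|S(X^c,Y^c)|\leq S(X^r,Y^r)$, which is what fixes the signs of $a^\pm$, and which I have verified) short-circuits all of that: it shows the paper's residual event is in fact \emph{empty}, since $S(X^c,Y^c)>0$ and $S(X^r,Y^r)>0$ force $L(a^+)<L(a^-)$ deterministically. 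This collapses the whole lemma to the single tail bound $P(S(X^c,Y^c)\leq 0)\to 0$, which you obtain by Chebyshev from $S(X^c,Y^c)=aS^2(X^c)+S(X^c,\lambda)$ with $\mathrm{Var}(S(X^c,\lambda)\mid X)=\sigma_\lambda^2 S^2(X^c)/n$ --- essentially the same first step as the paper, which uses the SLLN instead. What your approach buys is a shorter, exact argument that also exposes redundancy in the published proof; what it costs is the sign-tracking algebra for $L(a^\pm)$, which you flag and which checks out. Note that the two caveats you raise honestly --- the need for $S^2(X^c)$ to stay bounded away from zero and the failure of the bound at the boundary $a=0$ --- are equally present (but unacknowledged) in the paper's proof, since $\mathrm{Cov}(X^c,Y^c)=a\,\mathrm{Var}(X^c)\geq 0$ only gives $P(S(X^c,Y^c)<0)\to 0$ when the limit is strictly positive; so your version is, if anything, the more carefully stated of the two.
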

\begin{proof}
We prove the case $a\geq 0$ only. The case $a<0$ can be proved similarly. Under the assumption that $a\geq 0$, 
\begin{equation*}
  \text{Cov}\left(X^c, Y^c\right)=a\text{Var}\left(X^c\right)\geq0,
\end{equation*}
and consequently, $P\left(S\left(X^c, Y^c\right)<0\right)\to 0$. According to Theorem \ref{thm:ls_solu}, the only other circumstance under which $\hat{a}=a^{-}$ is when $S\left(X^r, Y^r\right)>S\left(X^c, Y^c\right)>0$ and $L\left(a^+, b^+, \mu^+\right)>L\left(a^-, b^-, \mu^-\right)$ simultaneously. It is therefore sufficient to show that
\begin{eqnarray}
  && P\left(S\left(X^r, Y^r\right)>S\left(X^c, Y^c\right)>0, L\left(a^+, b^+, \mu^+\right)>L\left(a^-, b^-, \mu^-\right)\right)\label{eqn:goal}\\
  && \to 0\nonumber.
\end{eqnarray}
Notice
\begin{eqnarray*}
  && L\left(a^+, b^+, \mu^+\right)-L\left(a^-, b^-, \mu^-\right)\\
  =&& \frac{1}{n}\sum_{i=1}^{n}\left[\left(a^+X_i^c+b-Y_i^c\right)^2-\left(a^-X_i^c+b-Y_i^c\right)^2\right]\\
  &&+\frac{1}{n}\sum_{i=1}^{n}\left[\left(a^+X_i^r+\mu-Y_i^r\right)^2-\left(a^-X_i^r+\mu-Y_i^r\right)^2\right]\\
  :=&& \frac{1}{n}\left(I+II\right).
\end{eqnarray*}
The first term
\begin{eqnarray*}
  I&=& \sum_{i=1}^{n}\left[\left(a^+X_i^c+b-Y_i^c\right)^2-\left(a^-X_i^c+b-Y_i^c\right)^2\right]\\
  &=& \sum_{i=1}^{n}\left[\left(a^{+}-a\right)^2\left(X_i^c-\overline{X^c}\right)^2+\left(\lambda_i-\overline{\lambda}\right)^2
  -2\left(a^{+}-a\right)\left(X_i^c-\overline{X^c}\right)\left(\lambda_i-\overline{\lambda}\right)\right]\\
  && -\sum_{i=1}^{n}\left[\left(a^{-}-a\right)^2\left(X_i^c-\overline{X^c}\right)^2+\left(\lambda_i-\overline{\lambda}\right)^2
  -2\left(a^{-}-a\right)\left(X_i^c-\overline{X^c}\right)\left(\lambda_i-\overline{\lambda}\right)\right]\\
  &=& \left[\left(a^{+}-a\right)^2-\left(a^{-}-a\right)^2\right]\sum_{i=1}^{n}\left(X_i^c-\overline{X^c}\right)^2\\
  && -2\left(a^{+}-a^{-}\right)\sum_{i=1}^{n}
  \left(X_i^c-\overline{X^c}\right)\left(\lambda_i-\overline{\lambda}\right)\\
  &=& \left(a^{+}-a^{-}\right)\left[\left(a^{+}+a^{-}-2a\right)\sum_{i=1}^{n}\left(X_i^c-\overline{X^c}\right)^2
  -2\sum_{i=1}^{n}\left(X_i^c-\overline{X^c}\right)\left(\lambda_i-\overline{\lambda}\right)\right].
\end{eqnarray*}
From this, and the assumption that $S\left(X^r, Y^r\right)>S\left(X^c, Y^c\right)>0$, we see that $I>0$ is equivalent to 
\begin{eqnarray}
  && \left(\frac{a^{+}+a^{-}}{2}-a\right)\sum_{i=1}^{n}\left(X_i^c-\overline{X^c}\right)^2
  -\sum_{i=1}^{n}\left(X_i^c-\overline{X^c}\right)\left(\lambda_i-\overline{\lambda}\right)\label{eqn:consist-1}\\
  && >0\nonumber.
\end{eqnarray}
On the other hand, 
\begin{eqnarray*}
  &&(\ref{eqn:consist-1})\\
  &&= \left[\frac{S\left(X^c, Y^c\right)}{S^2\left(X^c\right)+S^2\left(X^r\right)}-a\right]\sum_{i=1}^{n}\left(X_i^c-\overline{X^c}\right)^2
  -\sum_{i=1}^{n}\left(X_i^c-\overline{X^c}\right)\left(\lambda_i-\overline{\lambda}\right)\\
  &&=\left[\frac{\sum_{i<j}\left(X_i^c-X_j^c\right)\left(\lambda_i-\lambda_j\right)}{\sum_{i<j}\left(X_i^c-X_j^c\right)^2+\sum_{i<j}\left(X_i^r-X_j^r\right)^2}
  -a\frac{S^2\left(X^r\right)}{S^2\left(X^c\right)+S^2\left(X^r\right)}\right]\sum_{i=1}^{n}\left(X_i^c-\overline{X^c}\right)^2\\
  &&-\sum_{i=1}^{n}\left(X_i^c-\overline{X^c}\right)\left(\lambda_i-\overline{\lambda}\right)\\
  &&= \frac{\sum_{i=1}^{n}\left(X_i^c-\overline{X^c}\right)^2}{\sum_{i<j}\left(X_i^c-X_j^c\right)^2+\sum_{i<j}\left(X_i^r-X_j^r\right)^2}
  \sum_{i<j}\left(X_i^c-X_j^c\right)\left(\lambda_i-\lambda_j\right)\\
  &&-\sum_{i=1}^{n}\left(X_i^c-\overline{X^c}\right)\left(\lambda_i-\overline{\lambda}\right)
  -a\frac{S^2\left(X^r\right)}{S^2\left(X^c\right)+S^2\left(X^r\right)}\sum_{i=1}^{n}\left(X_i^c-\overline{X^c}\right)^2\\
  &&= \frac{\sum_{i=1}^{n}\left(X_i^c-\overline{X^c}\right)^2}{\sum_{i<j}\left(X_i^c-X_j^c\right)^2+\sum_{i<j}\left(X_i^r-X_j^r\right)^2}
  \left[n\sum_{i=1}^{n}\left(X_i^c-\overline{X^c}\right)\left(\lambda_i-\overline{\lambda}\right)\right]\\
  &&-\sum_{i=1}^{n}\left(X_i^c-\overline{X^c}\right)\left(\lambda_i-\overline{\lambda}\right)
  -a\frac{S^2\left(X^r\right)}{S^2\left(X^c\right)+S^2\left(X^r\right)}\sum_{i=1}^{n}\left(X_i^c-\overline{X^c}\right)^2\\
  &&=\sum_{i=1}^{n}\left(X_i^c-\overline{X^c}\right)\left(\lambda_i-\overline{\lambda}\right)
  \left[\frac{S^2\left(X^c\right)}{S^2\left(X^c\right)+S^2\left(X^r\right)}-1\right]\\
  &&-a\frac{S^2\left(X^r\right)}{S^2\left(X^c\right)+S^2\left(X^r\right)}\sum_{i=1}^{n}\left(X_i^c-\overline{X^c}\right)^2\\
  &&=-\frac{S^2\left(X^r\right)}{S^2\left(X^c\right)+S^2\left(X^r\right)}
  n\left[aS^2\left(X^c\right)+S\left(X^c, \lambda\right)\right],
\end{eqnarray*}
where $S\left(X^c, \lambda\right)=\frac{1}{n}\sum_{i=1}^{n}\left(X_i^c-\overline{X^c}\right)\left(\lambda_i-\overline{\lambda}\right)$ denotes the sample covariance of the random variables $X^c$ and $\lambda$, which converges to $0$ almost surely by the independence assumption. Therefore,
\begin{eqnarray}
  \frac{1}{n}I&=&-2\left(a^+-a^-\right)\frac{S^2\left(X^r\right)}{S^2\left(X^c\right)+S^2\left(X^r\right)}\left[aS^2\left(X^c\right)+S\left(X^c, \lambda\right)\right]\nonumber\\
  &&\to C_1<0\label{eqn:consist-2}
\end{eqnarray}
almost surely, as $n\to\infty$. \\

By the similar calculation, we have that the second term
\begin{eqnarray}
  \frac{1}{n}II&=&-2\left(|a^+|-|a^-|\right)\frac{S^2\left(X^c\right)}{S^2\left(X^c\right)+S^2\left(X^r\right)}\left[aS^2\left(X^r\right)+S\left(X^r, \eta\right)\right]\nonumber\\
  &&\to C_2<0\label{eqn:consist-3}
\end{eqnarray}
almost surely, as $n\to\infty$. (\ref{eqn:consist-2}) and (\ref{eqn:consist-3}) together imply that 
\begin{equation*}
  P\left(\hat{a}=a^{-}|a\geq 0\right)\to 0.
\end{equation*}
This completes the proof.
\end{proof}


\begin{thebibliography}{99}
	
\bibitem{Artstein75}
Artstein Z, Vitale, RA. A strong law of large numbers for random compact sets. Annals of Probability. 1975;5:879--882.

\bibitem{Aumann65}
Aumann RJ. Integrals of set-valued functions. J. Math. Anal. Appl. 1965;12:1--12.

\bibitem{Billard00}
Billard L, Diday E. Regression analysis for interval-valued data. In: Data Analysis, Classification and Related Methods, Proceedings of the Seventh Conference of the International Federation of Classification Societies (IFCS'00). Springer, Belgium; 2000.p.369--374.

\bibitem{Billard02}
Billard L, Diday E. Symbolic regression analysis. In: Classification, Clustering and Data Analysis, Proceedings of the Eighth Conference of the International Federation of Classification Societies (IFCS'02). Springer, Poland; 2002.p.281--288.

\bibitem{Billard07}
Billard L. Dependencies and variation components of symbolic interval-valued data. In: Selected Contributions in Data Analysis and Classification. Springer, Berlin Heidelberg; 2007.p.3--12.

\bibitem{Blanco11}
Blanco-Fern\'andez A, Corral N, Gonz\'alez-Rodr\'iguez G. Estimation of a flexible simple linear model for interval data based on set arithmetic. Computational Statistics \& Data Analysis. 2011;55:2568--2578.

\bibitem{Blanco12}
Blanco-Fern\'andez A, Colubi A, Gonz\'alez-Rodr\'iguez G. Confidence sets in a linear regression model for interval data. Journal of Statistical Planning and Inference. 2012;142:1320--1329.

\bibitem{Carvalho04}
Carvalho FAT, Lima Neto EA, Tenorio, CP. A new method to fit a linear regression model for interval-valued data. Lecture Notes in Computer Sciences. 2004;3238: 295--306.

\bibitem{Cattaneo12}
Cattaneo MEGV, Wiencierz A. Likelihood-based imprecise regression. International Journal of Approximate Reasoning. 2012;53:1137--1154.

\bibitem{Diamond90}
Diamond P. Least squares fitting of compact set-valued data. J. Math. Anal. Appl. 1990;147:531--544.

\bibitem{Gil01}
Gil MA, Lopez MT, Lubiano MA, and Montenegro M. Regression and correlation analyses of a linear relation between random intervals. Test. 2001;10,1:183--201.

\bibitem{Gil02}
Gil MA, Lubiano MA, Montenegro M, Lopez MT. Least squares fitting of an affine function and strength of association for interval-valued data. Metrika. 2002;56: 97--111.

\bibitem{Gil07}
Gil MA, Gonz\'alez-Rodr\'iguez G, Colubi A, Montenegro M. Testing linear independence in linear models with interval-valued data. Computational Statistics \& Data Analysis. 2007;51:3002--3015.

\bibitem{G-R07}
Gonz\'alez-Rodr\'iguez G, Blanco A, Corral N, Colubi A. Least squares estimation of linear regression models for convex compact random sets. Advances in Data Analysis and Classification. 2007;1:67--81.

\bibitem{Hormander54}
H\"ormander H. Sur la fonction d'appui des ensembles convexes dans un espace localement convexe. Arkiv f\"or Mat. 1954;3:181--186.

\bibitem{Hukuhara67}
Hukuhara M. Integration des applications mesurables dont la valeur est un compact convexe. Funkcialaj Ekvacioj. 1967;10:205--223.

\bibitem{Kendall74}
Kendall DG. Foundations of a theory of random sets. In: Harding EF and Kendall DG (ed) Stochastic Geometry. John Wiley \& Sons, New York; 1974.

\bibitem{Korner95}
K\"orner R. A variance of compact convex random sets. Institut f\"ur Stochastik, Bernhard-von-Cotta-Str. 2 09599 Freiberg; 1995.

\bibitem{Korner97}
K\"orner R. On the variance of fuzzy random variables. Fuzzy Sets and Systems. 1997;92:83--93.

\bibitem{Korner98}
K\"orner R, N\"ather W. Linear regression with random fuzzy variables: extended classical estimates, best linear estimates, least squares estimates. Information Sciences. 1998;109: 95--118.

\bibitem{Lyashenko82}
Lyashenko NN. Limit theorem for sums of independent compact random subsets of Euclidean space. Journal of Soviet Mathematics. 1982;20:2187--2196.

\bibitem{Lyashenko83}
Lyashenko NN. Statistics of random compacts in Euclidean space. Journal of Soviet Mathematics. 1983;21:76--92.

\bibitem{Manski02}
Manski CF, Tamer T. Inference on regressions with interval data on a regressor or outcome. Econometrica. 2002;70:519--546.

\bibitem{Matheron75}
Matheron G. Random Sets and Integral Geometry. John Wiley \& Sons, New York; 1975.

\bibitem{Molchanov05}
Molchanov I. Theory of Random Sets. Springer, London; 2005.

\bibitem{Neto08}
Lima Neto EA, Carvalho FAT. Centre and range method for fitting a linear regression model to symbolic interval data. Computational Statistics \& Data Analysis. 2008; 52:1500--1515.

\bibitem{Neto10}
Lima Neto EA, Carvalho FAT. Constrained linear regression models for symbolic interval-valued variables. Computational Statistics \& Data Analysis. 2010;54:333--347.

\bibitem{Radstrom52}
R\.adstr\"om H. An embedding theorem for spaces of convex sets. Proc. Amer. Math. Soc. 1952;3:165--169.

\end{thebibliography}
\end{document}